\PassOptionsToPackage{unicode}{hyperref}
\PassOptionsToPackage{hyphens}{url}
\PassOptionsToPackage{dvipsnames,svgnames,x11names}{xcolor}
\documentclass[12pt]{article}

\usepackage{amsmath,amssymb,amsthm}
\usepackage{iftex}
\ifPDFTeX
  \usepackage[T1]{fontenc}
  \usepackage[utf8]{inputenc}
  \usepackage{textcomp} 
\else 
  \usepackage{unicode-math}
  \defaultfontfeatures{Scale=MatchLowercase}
  \defaultfontfeatures[\rmfamily]{Ligatures=TeX,Scale=1}
\fi
\usepackage{lmodern}
\ifPDFTeX\else
\fi
\IfFileExists{upquote.sty}{\usepackage{upquote}}{}
\IfFileExists{microtype.sty}{
  \usepackage[]{microtype}
  \UseMicrotypeSet[protrusion]{basicmath} 
}{}
\makeatletter
\@ifundefined{KOMAClassName}{
  \IfFileExists{parskip.sty}{%
    \usepackage{parskip}
  }{
    \setlength{\parindent}{0pt}
    \setlength{\parskip}{6pt plus 2pt minus 1pt}}
}{
  \KOMAoptions{parskip=half}}
\makeatother
\usepackage{xcolor}
\makeatletter
\ifx\paragraph\undefined\else
  \let\oldparagraph\paragraph
  \renewcommand{\paragraph}{
    \@ifstar
      \xxxParagraphStar
      \xxxParagraphNoStar
  }
  \newcommand{\xxxParagraphStar}[1]{\oldparagraph*{#1}\mbox{}}
  \newcommand{\xxxParagraphNoStar}[1]{\oldparagraph{#1}\mbox{}}
\fi
\ifx\subparagraph\undefined\else
  \let\oldsubparagraph\subparagraph
  \renewcommand{\subparagraph}{
    \@ifstar
      \xxxSubParagraphStar
      \xxxSubParagraphNoStar
  }
  \newcommand{\xxxSubParagraphStar}[1]{\oldsubparagraph*{#1}\mbox{}}
  \newcommand{\xxxSubParagraphNoStar}[1]{\oldsubparagraph{#1}\mbox{}}
\fi
\makeatother

\usepackage{longtable,booktabs,array}
\usepackage{calc} 
\usepackage{etoolbox}
\makeatletter
\patchcmd\longtable{\par}{\if@noskipsec\mbox{}\fi\par}{}{}
\makeatother
\IfFileExists{footnotehyper.sty}{\usepackage{footnotehyper}}{\usepackage{footnote}}
\makesavenoteenv{longtable}
\usepackage[pdftex]{graphicx}
\makeatletter
\def\maxwidth{\ifdim\Gin@nat@width>\linewidth\linewidth\else\Gin@nat@width\fi}
\def\maxheight{\ifdim\Gin@nat@height>\textheight\textheight\else\Gin@nat@height\fi}
\makeatother
\setkeys{Gin}{width=\maxwidth,height=\maxheight,keepaspectratio}
\makeatletter
\def\fps@figure{htbp}
\makeatother

\makeatletter
\@ifpackageloaded{caption}{}{\usepackage{caption}}
\AtBeginDocument{%
\ifdefined\contentsname
  \renewcommand*\contentsname{Table of contents}
\else
  \newcommand\contentsname{Table of contents}
\fi
\ifdefined\listfigurename
  \renewcommand*\listfigurename{List of Figures}
\else
  \newcommand\listfigurename{List of Figures}
\fi
\ifdefined\listtablename
  \renewcommand*\listtablename{List of Tables}
\else
  \newcommand\listtablename{List of Tables}
\fi
\ifdefined\figurename
  \renewcommand*\figurename{Figure}
\else
  \newcommand\figurename{Figure}
\fi
\ifdefined\tablename
  \renewcommand*\tablename{Table}
\else
  \newcommand\tablename{Table}
\fi
}
\@ifpackageloaded{float}{}{\usepackage{float}}
\floatstyle{ruled}
\@ifundefined{c@chapter}{\newfloat{codelisting}{h}{lop}}{\newfloat{codelisting}{h}{lop}[chapter]}
\floatname{codelisting}{Listing}

\makeatother
\makeatletter
\@ifpackageloaded{caption}{}{\usepackage{caption}}
\@ifpackageloaded{subcaption}{}{\usepackage{subcaption}}
\makeatother

\ifLuaTeX
  \usepackage{selnolig}  
\fi
\usepackage[compress]{natbib}
\usepackage{bookmark}

\IfFileExists{xurl.sty}{\usepackage{xurl}}{} 
\hypersetup{
  pdftitle={Pivotal and identification-robust  nonparametric
  inference in linear IV models},
  pdfauthor={Bertille Antoine; Pascal Lavergne},
  pdfkeywords={Weak Identification, Specification Testing, Subvector Inference},
  colorlinks=true,
  linkcolor={blue},
  filecolor={Maroon},
  citecolor={Blue},
  urlcolor={Blue},
  pdfcreator={LaTeX via pandoc}}

\newcommand{\anon}{1}

\newcommand{\E}{\operatorname{E}}

\newcommand{\HICM}{\operatorname{HICM}}

\newcommand{\Var}{\operatorname{Var}}

\def\E{\mathbb{E}\,}
\def\Rit{\mathbb{R}}
\newcommand{\ICM}{\operatorname{ICM}}
\DeclareMathOperator{\diag}{diag}
\newcommand{\mI}{\mbox{\bf I}}
\newcommand{\mO}{\mbox{\bf 0}}

\newcommand{\ip}[2]{\langle #1,#2 \rangle}

\newcounter{thm}[section]
\newtheorem{theor}[thm]{Theorem}

\newtheorem{lem}[thm]{Lemma}

\newtheorem{corollary}{Corollary}[section]
\newtheorem{hp}{Assumption}

\newcommand{\cvp}{\mbox{$\stackrel{p}{\longrightarrow}\,$}}
\newcommand{\cvas}{\mbox{$\stackrel{as}{\longrightarrow}\,$}}
\def\ind{\mathbb{I}}

\begin{document}

\def\spacingset#1{\renewcommand{\baselinestretch}%
{#1}\small\normalsize} \spacingset{1}


\if1\anon
{
  \title{\bf Pivotal and identification-robust \\nonparametric
  inference in linear IV models}
  \author{Bertille Antoine\thanks{
    B. Antoine acknowledges funding from The Social Sciences and Humanities Research Council of Canada. P. Lavergne acknowledges funding from the French National Research Agency (ANR) under the Investments for the Future program (Investissements d'Avenir, grant ANR-17-EURE-0010).}\hspace{.2cm}\\
    Department of Economics, Simon Fraser University\\
    and \\
    Pascal Lavergne\\
    Toulouse School of Economics, Universit\'e Toulouse  Capitole}
  \maketitle
} \fi

\if0\anon
{
  \bigskip
  \bigskip
  \bigskip
  \begin{center}
    {\LARGE \bf Pivotal and identification-robust \\ \vspace{.2cm} nonparametric
  inference in linear IV models}
\end{center}
  \medskip
} \fi

\bigskip
\begin{abstract}
We develop new inference procedures for a linear IV model that are
robust to identification strength and heteroskedasticity of unknown
form, and nonparametric with respect to the first-stage
equation.  Our first test is tailored for inference on parameters of
endogenous explanatory variables. Our new statistic modifies that
of \cite{antoine_identification-robust_2023} to directly account for
heteroskedasticity of unknown form.
As a result, it is asymptotically pivotal, so that inference is greatly
facilitated in practice.  We also develop (i) an identification-robust
subvector inference procedure that does not rely on the knowledge of
identification strength for the remaining parameters, and (ii) a pure
specification test.  In both cases, the tests are conservative but
powerful.
We show that our procedures are computationally friendly and
competitive with existing ones in simulations and an
application.
\end{abstract}

\noindent%
{\it Keywords:} Weak Identification, Specification Testing, Subvector Inference
\vfill

\newpage
\spacingset{1.8} 

\section{Introduction}
\label{sec:intro}

We consider cross-section data observations and the linear  model popular from micro-econometrics
\begin{equation}
y_{i} = Y'_{2i} \beta + X'_{1i} \gamma +  u_{i} \qquad \E (u_{i} |
X_{1i}, X_{2i}) = 0 \quad i=1, \ldots n
\, ,
\label{eq:model}
\end{equation}
where $Y_{2}$ are endogenous variables, $X_{1}$ are exogenous control
variables, and $X_{2}$ are exogenous instrumental variables.
Over the last 30 years, it has become clear that standard asymptotic
approximations may reflect poorly what is observed even for large samples when there
is weak correlation between instrumental variables and endogenous
explanatory variables. Alternative asymptotic frameworks have
been developed to account for potentially weak identification, along with
tests that deliver reliable inference about
parameters of interest, see e.g., \cite{SS}, \cite{SW},
\cite{Moreira}, \cite{kleibergen_pivotal_2002,Kleib},
\cite{Andrews2012}, \cite{andrews_identification_2019},
\cite{Andrews2016}, and
\cite{andrews_conditional_2016,andrews_geometric_2016}.
 Surveys on weak identification issues include
\cite{SWY}, \cite{Dufour}, \cite{HH}, \cite{AS}, and
\cite{Andrews2019}.
These inference procedures are robust to identification strength and
uniformly control size. They have good power properties when the first stage is linear
and well specified, in particular,  the conditional likelihood ratio
test of \cite{Moreira} is
nearly optimal, see \cite{AMS2006} and \cite{andrews_optimal_2019}.
However,  reliance on a parametric first stage  may be misleading.
From an empirical perspective, \cite{Dieterle2016} have documented
significant nonlinearities in first-stage regression in several
applied microeconomics papers.   Since practitioners typically have little prior
information on the form of the relation between endogenous variables
and instruments, one may consider estimating the reduced form
nonparametrically, e.g., using an increasing  number of approximating
series. However,  nonparametrically estimated  instruments cannot be
relied upon under weak identification, see
\cite{JP11} and \cite{mikusheva_inference}.
Indeed, if identification is not strong enough,
the statistical variability of a nonparametric estimator will dominate
the signal we aim to estimate.

In a recent work, building on the Integrated Conditional Moment (ICM)
 principle  proposed by \cite{Bierens82},
\cite{antoine_identification-robust_2023} develop
two inference procedures that are robust to any identification pattern
and unknown heteroskedasticity, and do not rely on a linear projection
in the first-stage equation. In particular, they study an ICM test
that tests at the same time for the value of the parameter and the
specification of the model.  However, critical values should be
simulated for each value of the parameter.

Our present work builds on and improves upon the ICM test.  We
 propose a new heteroskedastic version of their ICM statistic, labeled
 HICM. Its key feature is that it is asymptotically pivotal, i.e.,
 under the null hypothesis $H_0:
\beta=\beta_{0}$, its asymptotic distribution is independent of
$\beta_0$. Since building a confidence interval for $\beta$ involves
inverting a test statistic for $H_0$ a large number of times,
the key advantage of our test is computational, as using critical
values that are independent of parameter values significantly speeds
up the process. In our implementation, we found that computation time
can be reduced by a factor  close to 70. In practice, this means
that when ICM-based inference can take weeks in some instances, it
would reduce to a few hours with our new procedure.

We also develop two additional inference procedures that build on this
particular feature and cannot be entertained using the ICM test of
\cite{antoine_identification-robust_2023}.  First, we investigate
subvector inference and obtain valid identification-robust inference
on some parameters of interest, regardless of the underlying
identification of the other parameters not under test. Second, we
devise a {\em pure} specification test that is powerful independently
of identification strength and of the particular functional form of
the reduced form that links instruments and endogenous variables.  In
both cases, inference is conservative but easily implemented.  In
addition, the different tests uniformly control size irrespective of
identification strength.

We  contribute to the recent literature  concerned with joint
issues of specification  and weak identification.
With a focus on inference  under potential misspecification,
\cite{KleibergenZhan} propose a
doubly-robust Lagrange multiplier statistic to deliver
identification-robust and valid inference on the pseudo-true value in
a moment-based model.
With a focus on specification testing,
\cite{AntoineFrazierRenault2026WP}  develop procedures
in GMM and minimum distance settings that are robust to weak
identification. \cite{Dovonon2026} propose a  modified J-test with
an increasing number of instruments that is asymptotically pivotal
and agnostic on identification, but requires to choose the number and
identity of nonlinear transformations of the instrumental variables.
The procedures proposed here complement these approaches, while
allowing the first-stage equation to be nonparametric.

We illustrate the finite sample properties of our procedures in a
series of simulations. When the model is correctly specified, we find
that the level of our HICM-based inference is well
controlled, that it has significant power advantages compared to
existing procedures such as the one of \cite{SW} when the reduced form
equation is nonlinear, and that it  is also competitive for a linear
reduced form. Similarly, when testing whether the model is correctly
specified, our HICM-based specification test demonstrates excellent
size and power properties. We also demonstrate the relevance and power
of our procedure in an empirical application that studies the effects
of population decline in Mexico on land concentration in the sixteenth
century, using the data and framework of \cite{Mexico}.

Our paper is organized as follows. In Section \ref{sec.frame}, we
introduce our framework and our new HICM test statistic. We discuss
how it can be used for powerful and identification-robust inference
that is compatible with subvector inference, as well as for
specification testing. The asymptotic properties of our inference
procedures and specification test based on HICM are studied in
Section \ref{sec.asy}. Their finite sample properties are investigated
in a series of Monte Carlo experiments and in an empirical application
in Sections  \ref{sec.simulation} and \ref{sec.ea}. We conclude in Section \ref{sec.concl}.

\section{Framework}\label{sec.frame}

The influence of exogenous control variables $X_{1}$ can be projected
out through orthogonal projection in (\ref{eq:model}), which does not
influence our reasoning but simplifies exposition.  Hence, in what
follows, we consider a structural equation of the form
\begin{equation} \label{eq:linear-SE}
y_i  =  Y_{2i}' \beta + u_{i}  \qquad \E(u_i|Z_i)= 0  \quad i=1, \ldots n \, ,
\end{equation}
augmented by a first-stage reduced form equation for $Y_2$,
\begin{equation} \label{eq:general-RF}
Y_{2i} = \Pi(Z_{i}) + V_{2i} \qquad \E(V_{2i}|Z_i)= 0 \, .
\end{equation}
The variables $Z$ include the instruments $X_{2}$, but may also
include the exogenous $X_{1}$ to account for potential nonlinearities in $X_{1}$
in the function $\Pi(\cdot)$. Combining \eqref{eq:linear-SE} and
\eqref{eq:general-RF} yields
\[
y_i - Y_{2i}'\beta_{0} = \Pi'(Z_i) \left(\beta-\beta_0\right) + \varepsilon_i,
\qquad \text{where} \quad \varepsilon_i = u_i + V_{2i}' \left(\beta - \beta_{0}\right)
\quad \text{and} \quad \E\left(\varepsilon_i|Z_i\right)=0
\, .
\]
We propose to test the null hypothesis
\begin{equation}
\label{eq:H0}
H_{0}: \  \E(y - Y'_{2}\beta_{0}|Z) = 0
\quad \mbox{a.s. }
\, ,
\end{equation}
which considers at the same time the null hypothesis $\beta = \beta_0$ and the correct
specification of the model.
In what follows, we introduce our proposed HICM test, then we discuss
subvector inference and specification testing.

\subsection{Heteroskedasticity-robust ICM statistic} \label{sub:HICM}

\cite{antoine_identification-robust_2023} introduced the ICM test
statistic. Building on  \cite{Bierens82},  ${H}_{0}$  is shown to be equivalent to
\begin{equation}
{H}_{0}: \
\E \left[ \left( y - Y_{2}'\beta_0 \right)
\exp( i s'Z) \right] = 0  \quad \forall s\in\mathbb{R}^k
\, .
\label{icmexp}
\end{equation}
Bierens' Integrated Conditional Moment (ICM) statistic is
\begin{equation}
 \int_{\Rit^{k}}{|n^{-1/2} \sum_{j=1}^{n}{ \left(y_j -
     Y'_{2j}\beta_{0}\right) \exp(i s'Z_{j})}|^{2} \, d\mu (s)}  \, ,
\label{icm}
\end{equation}
where $\mu$ is some symmetric measure with support $\Rit^{k}$.
The  ICM statistic of \cite{antoine_identification-robust_2023} is  a
standardized version of the latter that can be
written as the following quadratic form
\begin{equation} \label{eq:AR-stat}
\ICM (\beta_{0}) = \frac{b'_0 Y' W Y b_{0}}{b_{0}'\widehat{\omega}
b_{0}} \qquad \text{where} \qquad b_0 = (1, - \beta_0')' \quad , \quad
{Y} =
\left[
\begin{array}{cc}
y_{1} & Y'_{21} \\
\vdots & \vdots \\
y_{n} & Y'_{2n}
\end{array}
\right]
\, ,
\end{equation}
 $W$ is the matrix of generic element
$n^{-1}w\left(Z_{j}-Z_{m}\right)$, with $w(z) = \int_{\Rit^{k}} \exp(i
s'z) \, d\mu(s)$, and $\widehat{\omega}$ is a (semiparametric)
estimator of $\omega = \E (\Var(Y|Z))$.  Hence, the ICM statistic
\eqref{eq:AR-stat} resembles the AR statistic, with $W$ replacing
$P_{Z}$, the orthogonal projection on $Z$.  As apparent from its
construction, it is designed to test {the correct specification of the
model} together with the parameter value, as does the AR test under
the assumption of a linear reduced form. While under homoskedasticity,
the statistic has an asymptotic distribution independent of $\beta_0$,
under heteroskedasticity, critical values depend on the considered
$\beta_0$.
A confidence set is generally obtained by inverting the $\ICM$
test as $\left\{ \beta_{0} : \ICM(\beta_{0}) < c_{1-\alpha}(Z, \beta_0)
\right\}$. In particular, this means that, under heteroskedasticity, critical values
$c_{1-\alpha}(Z, \beta_0)$ should be simulated for each candidate
$\beta_0$, see \cite{antoine_identification-robust_2023} for details.
Unless  $\beta_0$ is unidimensional, the associated computational cost can be prohibitive.

We now construct a version of $\ICM$ that directly accounts for unknown
heteroskedasticity in its construction. Let  $\Omega(Z) = \Var(Y|Z)$,
the conditional variance of $Y$ given $Z$. We rewrite our null
hypothesis of interest in the equivalent form
\begin{equation}
{H}_0: \,  \E\left[ (b_0' \Omega(Z) b_0)^{-1/2}  (y - Y'_{2}\beta_{0}) |Z \right] = 0
\quad \mbox{a.s. }
\label{hyp-hicm}
\end{equation}
From \cite{Bierens82}'s results, this is also equivalent to
\begin{equation}
{H}_{0}: \
\E \left[ (b_0' \Omega(Z) b_0)^{-1/2} \left( y - Y_{2}'\beta_0 \right)
\exp( i s'Z) \right] = 0  \quad \forall s\in\mathbb{R}^k
\, .
\label{icmexp2}
\end{equation}
Following a rationale similar to the one above, we can build an ICM
statistic for this hypothesis.
Given a consistent estimator $\widehat{\Omega}(\cdot)$ of $\Omega(\cdot)$,
we thus define
\begin{align}
\HICM (\beta_{0}) & =
n^{-1}  \sum_{j=1}^{n} \sum_{m=1}^{n} {
(b_0' \widehat{\Omega}(Z_j) b_0)^{-1/2} (Y_j' b_0)
(b_0' \widehat{\Omega}(Z_m) b_0)^{-1/2} (Y_m' b_0)
w(Z_j - Z_m) }
\nonumber \\
&  = b_{0}' Y'
\left[ (e \otimes b_0 )'  \widehat{\Omega}  (e \otimes b_0 ) \right]^{-1/2}   W
\left[ (e \otimes b_0 )'  \widehat{\Omega}  (e \otimes b_0 ) \right]^{-1/2}
Y b_{0}
\label{eq:HICM}
\, ,
\end{align}
where $\widehat{\Omega} = \diag\left( \widehat{\Omega}(Z_i), i = 1,
\ldots n\right) $, and $e$ is the $n$-vector of ones.
If the model is correctly specified with parameter $\beta_0$, twe show that $\HICM(\beta_0)$ asymptotically follows the same
distribution as $G' W G$, where $G\sim N (\mO, \mI)$. This
distribution is independent of the particular value of $\beta_0$. We
can then easily simulate the distribution of our statistic under $H_{0}$ and
recover a critical value as the $(1-\alpha)$ quantile of the
distribution of $G' W G$, denoted as $c_{1-\alpha}(Z)$.
A confidence set is obtained by inverting the $\HICM$ test, that is $\left\{ \beta_{0} :
\HICM(\beta_{0}) < c_{1-\alpha}(Z) \right\}$. As we will see,  the
use of critical values that are independent of $\beta_0$ is
computationally very advantageous.

Practically, we need to detail a number of elements of our
statistic. First, one needs to estimate the conditional variance $\Omega(\cdot)$.
One should note that weak identification does not preclude consistent
estimation of this quantity.
The conditional variance can be estimated parametrically if
one is ready to make an assumption on its functional form.  Otherwise,
we can resort to nonparametric conditional variance estimation.
Several consistent estimators have been developed for a univariate
$Y$, and generalize easily.  To make things concrete, we focus on
kernel smoothing, which is used in our simulations and application.
Let
\[
\overline{Y}(z) =  \left(nb_n\right)^{-1} \sum_{i=1}^n Y_i K \left( (
 Z_i - z)/ b_n \right)
\]
based on the $n$ iid observations $(Y_{i},Z_{i})$, a kernel
$K(\cdot)$, and a bandwidth $b_n$.  With $e = (1, \ldots , 1)'$, let
$\widehat{f} (z) = \overline{e} (z)$ and $\widehat{Y} (z) =
\overline{Y}(z)/\widehat{f} (z)$.  The conditional variance estimator
 is defined as
\begin{equation} \label{eq:condit var  estim Y}
  \widehat{\Omega} (z)
  =
\left(nb_n\right)^{-1} \frac{\sum_{i=1}^n \left(Y_{i} - \widehat{Y} (Z_{i})\right)
\left(Y_{i} - \widehat{Y} (Z_{i})\right)'
K \left( ( Z_i - z)/ b_n \right)}{\widehat{f}(z)}
   \, .
\end{equation}
This estimator, studied by \cite{NCM}, is a generalization of the
kernel conditional variance and is positive definite whenever
$K(\cdot)$ is positive.   We can even consider
\[
  \widehat{\Omega} (z)
  =
\left(nb_n\right)^{-1} \frac{\sum_{i=1}^n Y_{i}
Y'_{i} K \left( ( Z_i - z)/ b_n \right)}{\widehat{f}(z)}
   \, ,
\]
which estimates the uncentered moments $\E \left( Y'Y\right)$, and
thus avoids preliminary estimation of $\E (Y|Z)$. Indeed, we only need
$\widehat{\Omega} (z)$ to estimate $\Var(y - Y_2'\beta_0|Z)$,
which equals $\E[(y - Y_2'\beta_0)^2|Z]$ under $H_0$, so estimation
of uncentered cross moments is sufficient for this purpose.

Another element of our statistic is the function $w(\cdot)$, or
equivalently the measure $\mu$. The role of $w(\cdot)$ resembles the
one of the kernel in nonparametric estimation, but, in contrast, it is
a {\em fixed user-chosen function that does not vary with the sample
size}. To make this explicit, we will impose that the squared integral
of $w(\cdot)$ equals one.\footnote{A more involved restriction would
be to impose a similar condition on the Frobenius norm of $W$.}  The
condition for $\mu$ having support $\Rit^{k}$ translates into the
restriction that $w(\cdot)$ should have a strictly positive Fourier
transform almost everywhere.  Examples include products of triangular,
normal, logistic, see \citet[Section 23.3]{JKB95}, Student, including
Cauchy, see \cite{DK2002}, or Laplace densities.  To achieve scale
invariance, we recommend, as in \cite{Bierens82}, to scale the exogenous
instruments by a measure of dispersion, such as their empirical
standard deviation.

If $Z$ has bounded support, results from \cite{Bierens82}  yield
that ${H}_{0}$ holds if and only if
$
\E \left[ (b_0' \Omega(Z) b_0)^{-1/2} \left( y - Y_{2}'\beta_0 \right)
\exp(s'Z) \right] = 0
$
for all $s$ in an arbitrary neighborhood of $0$ in
$\mathbb{R}^q$. Hence $\mu$ can be taken as any symmetric probability
measure that contains $0$ in the interior of its support. For
instance, we can consider the product of uniform distributions on
$[-\pi,\pi]$, so that $w(\cdot)$ is the product of sinc functions.  As
noted by \cite{Bierens82}, there is no loss of generality assuming a
bounded support, as his above-mentioned equivalence result equally
applies to a one-to-one transformation of $Z$, which can be chosen
with a bounded image. Hence, we will adopt this assumption in our
theoretical analysis.

On a more general level, \cite{Bierens82}'s principle replaces
conditional moment restrictions by a continuum of unconditional
moments involving the complex exponential function. Other functions
have been used beyond this choice, see
\cite{Bierens90} and \cite{BP97}. \cite{stinchcombe_white_1998}
characterize  a large class of functions that could
generate an equivalent set of unconditional moments. As detailed by
\cite{LP2013}, this yields a full collection of potential estimators
under strong (or semi-strong) identification, such as the ones developed
by \cite{dominguez_consistent_2004}, \cite{AntoineLavergne}, and
\cite{escanciano_simple_2018} among others. This would also yield a
collection of test statistics that could be used under weak
identification, see \cite{Chen2025} for a recent instance.
Here, we focus on a particular application of the ICM, which is suitable for
theoretical investigation and practical implementation, and we leave
for future work the investigation of the relative merits of these
different ICM-type tests.

\subsection{Subvector inference} \label{subs:subvector}

Let us partition our model \eqref{eq:linear-SE} as
\begin{equation}
y_{i} = Y'_{2i,1} \beta_{0,1} + Y'_{2i,2} \beta_{0,2} +  u_{i} \qquad \E (u_{i} |
Z_{i}) = 0 \quad i=1, \ldots n
\, .
\label{eq:model3}
\end{equation}
Our statistic writes
\[
 \HICM(\beta_{0,1},\beta_{0,2}) = b_{0}' Y'
\left[ (e \otimes b_0 )'  \widehat{\Omega}  (e \otimes b_0 ) \right]^{-1/2}   W
\left[ (e \otimes b_0 )'  \widehat{\Omega}  (e \otimes b_0 ) \right]^{-1/2}
Y b_{0}
\, ,
\]
with $b_0=(1, -\beta_{0,1}',-\beta_{0,2}')$.  Suppose we aim at
identification-robust inference on $\beta_{0,1}$ only, so $\beta_{0,2}$ can
be seen as a nuisance parameter that cannot be easily partialed out or
reliably estimated.

We resort  to a {\em conservative projection approach}, see
  \cite{Dufour1997}. Our test statistic is
\[
\HICM^*(\beta_{0,1})
\equiv \min_{\beta_{2}} \HICM(\beta_{0,1},\beta_2)
\, .
\]
Since $\HICM^*(\beta_{0,1}) \leq \HICM(\beta_{0,1},\beta_{0,2})$, we
can rely on the critical value $c_{1-\alpha}(Z)$ from the null
asymptotic distribution of $\HICM$, which can be easily simulated as
explained above. The approach is conservative but allows to control
the level of the test.  Inverting the test yields the confidence set
$\left\{ \beta_{0,1} : \HICM^*(\beta_{0,1}) < c_{1-\alpha}(Z)
\right\}$ for $\beta_1$.
Valid subvector inference on $\beta_1$ thus obtains without any
assumptions on identification  of $\beta_2$.

\subsection{Specification testing} \label{sub:specif test}

In line with the previous subvector inference procedure, our
specification test is based on
\[
\HICM^* = \min_{\beta} \HICM(\beta)
\, .
\]
Under correct specification, there is a $\beta_0$ such that
(\ref{hyp-hicm}) holds. Since $\HICM^* \leq \HICM(\beta_0)$,
we can rely on the simulated null distribution of $\HICM(\beta_0)$, which
is independent of $\beta_0$, to bound the distribution of $\HICM^*$.
Our asymptotic test rejects the correct specification of the model whenever
$\HICM^* > c_{1-\alpha}(Z)$. From the above inequality, we have
control of asymptotic size, but the test is conservative.
If the model is incorrectly specified, then there is no $\beta_0$ such that
\[
\E \left[ (b_0' \Omega(Z) b_0)^{-1/2}   \left( y - Y_{2}'\beta_0
\right) \exp( i s'Z) \right] = 0
\quad \forall s\in\mathbb{R}^k
\, .
\]
When the above quantity is uniformly bounded away from zero, we expect
$\HICM^*$ to diverge and our test to be consistent. In the next
section, we study the behavior of our specification test under some
local alternatives and show that it has non-trivial power.

\section{Asymptotics}\label{sec.asy}

\subsection{Uniform asymptotic validity}

We consider the following assumptions.
\begin{hp}
\label{ass:iid}
(i) The observations $(y_{i}, Y_{2i}, Z_{i})$ form a rowwise
 independent triangular array, where the  marginal  distribution of
 $Z$   remains unchanged, $Z$ is bounded, and
 for some $\delta > 0$ and $M' < \infty$,
$\sup_{z} \E \left( \|Y\|^{2+\delta} | Z=z\right) \leq M' $  uniformly
 in $n$.
(ii) The observations follow (\ref{eq:linear-SE}) and
 (\ref{eq:general-RF}), and $\beta_0$ belongs to a compact set ${\cal B}$.
\end{hp}
For the sake of simplicity, we do not use a double index for observations, and we
denote by $\left\{Y_{1}, \ldots , Y_{n}\right\}$ the independent copies
from $Y$ for a sample size $n$.
The assumption of a constant distribution for $Z$ could be weakened
but is made to formalize that identification strength is related to
the conditional distribution of $Y$ given $Z$ only.
The boundedness of $Z$ could be relaxed, at the cost of more
technicalities, but, as previously noted, a one-to-one transformation
of $Z$ can be entertained to ensure it holds.

We denote by ${\cal P}$ the class of
distributions on which our observations lie.
Let ${\cal E}$ be a class of vector-valued functions $\Pi(\cdot)$ and let $N \left(
\varepsilon,{\cal E}, L^{2} (Q)\right)$ be the covering number of
${\cal E}$, that is, the minimum number of $L^{2} (Q)$
$\varepsilon$-balls needed to cover ${\cal E}$, where a $L_{2} (Q)$
$\varepsilon$-ball around  $\Pi(\cdot)$ is the set of vector functions
$\left\{ h \in L^{2} (Q) \ : \ \int_{}^{}{ \|h-\Pi\|^{2} \, dQ } < \varepsilon \right\}$.
\begin{hp}
\label{ass:pi}
The conditional expectation vector $\E \left( Y_{2} | Z=\cdot\right)$ belongs
to a class of vector functions  ${\cal E}$  such that
$\forall \, \Pi(\cdot) \in {\cal E}$, $\| \Pi(\cdot) \| \leq F(\cdot)$ with
\[
\lim_{M \rightarrow \infty} \sup_{\cal P}
\E \left[  F^{2} (Z) \ind\left( F (Z) > M \right)  \right] = 0
\]
 and
\[
\log N \left( \varepsilon \E^{1/2} \left( F^{2} (Z)\right), {\cal E}, L^{2} (P) \right) \leq K
\varepsilon^{-V} \quad \mbox{for some } V < 2 \, ,
\]
for all $P \in {\cal P}$ and  some $K,V$  independent of $P$.
\end{hp}
\cite{AndrewsHB} and \cite{van_der_vaart_bracketing_1994}, among
others, exhibit classes of smooth
functions that fulfill the above conditions.

Let ${\cal O}$ be a class of matrix-valued functions, and let $N \left(
\varepsilon,{\cal O}, L_{2} (Q)\right)$ be the covering number of
${\cal O}$, defined similarly as above.
\begin{hp}
\label{ass:variance}
\begin{itemize}
\item[(i)] $\sup_{P \in {\cal P}} \Pr \left[ \| \widehat{\Omega} - \Omega\| > \varepsilon
\right] \rightarrow 0 \quad  \forall \varepsilon >0$.
\item[(ii)] $\Omega (\cdot)$ belongs
to a class of matrix functions  ${\cal O}$  such that
$O < \underline{\lambda} \leq \inf_{z} \lambda_{\min} (\Omega (z)) \leq
\sup_{z} \lambda_{\max} (\Omega (z)) \leq \overline{\lambda} < \infty$
for all $\Omega(\cdot)\in {\cal O}$ and
\[
\log N \left( \varepsilon, {\cal O}, L^{2} (P) \right) \leq K
\varepsilon^{-V} \quad \mbox{for some } V < 2 \, ,
\]
for all $P \in {\cal P}$ and  some $K,V$  independent of $P$.
\item[(iii)] $\sup_{P\in {\cal P}}
\Pr \left( \widehat{\Omega}(\cdot) \in {\cal O}\right) \rightarrow 1$
as  $n\rightarrow\infty$.
\item[(iv)] $ \sup_{P \in {\cal P}}  \int_{}^{}{} \| \widehat{\Omega} (Z) - \Omega (Z)
\|^{2} \, dP (Z) \cvp 0$.
\end{itemize}
\end{hp}
This assumption entails, in particular, that conditional variance
estimation does not affect the asymptotic behavior of our statistics.
There is a tension between the generality of the class of functions
${\cal O}$ and the class of possible distributions ${\cal P}$.
When $\Omega (\cdot)$ is of parametric form, Assumption
\ref{ass:variance} will be satisfied for a large class of
distributions.  When $\Omega(\cdot)$ is considered nonparametric and
estimated accordingly, one typically assumes that its components are
smooth functions, and to prove (iii), one has to show that $\widehat{\Omega}(\cdot)$
also satisfies the same smoothness conditions with probability
converging to 1.  Such results have been derived, see
e.g., \cite{andrews_nonparametric_1995} for kernel estimators or
\cite{cattaneo_optimal_2013} for partitioning estimators.  Uniform
convergence of nonparametric regression estimators (and their
derivatives) generally requires the domain of the functions to be
bounded and the absolutely continuous components of the distributions
of the conditioning variables to have densities bounded away from zero
on their support. When they are not, \cite{andrews_nonparametric_1995}
discusses the use of a vanishing trimming that is compatible with the
stochastic equicontinuity results of \cite{AndrewsHB}. Condition (iv)
is dealt with in the literature on honest confidence intervals using
$L^{2}$ norm, see \cite{robins_adaptive_2006} and the references therein.

\begin{hp}
\label{ass:w}
 $w(\cdot)$ is a symmetric, bounded density  such
 that $\int_{}^{}{w^{2} (x)
 \, dx} = 1$.  Its Fourier transform is a density,
 whose bounded support contains a neighborhood of the origin.
 \end{hp}
We denote by $c_{1-\alpha} (Z)$  the  critical value of $\HICM$
obtained by the simulation-based method detailed above.\footnote{We neglect the
approximation error due to a finite number of simulations by assuming
the number of simulations is infinite so that the critical values are
exact.}
Let ${\cal P}_{\beta_{0}}$ be the subset of distributions in ${\cal
P}$ such that $\beta=\beta_{0}$ in (\ref{eq:linear-SE}).
 The following result establishes that our tests
control size uniformly over a large class of probability distributions.

\begin{theor} Under Assumptions \ref{ass:iid}, \ref{ass:pi},
\ref{ass:variance}, and \ref{ass:w},
\[
\limsup_{n\rightarrow \infty} \sup_{\beta_{0} \in {\cal B}}
\sup_{P \in {\cal P}_{\beta_{0}}}
\Pr\left[ \HICM (\beta_{0}) > c_{1-\alpha}(Z) \right] \leq
\alpha
\, .
\]
\label{th:level}
\end{theor}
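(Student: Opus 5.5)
The plan is to argue along sequences. Suppose the claim fails. Then there are $\eta>0$, a subsequence, and parameters $\beta_{0,n}\in{\cal B}$ with $P_n\in{\cal P}_{\beta_{0,n}}$ such that $\Pr_{P_n}[\HICM(\beta_{0,n})>c_{1-\alpha}(Z)]>\alpha+\eta$. By compactness of ${\cal B}$ we can extract a further subsequence along which $\beta_{0,n}\to\beta^*$. It then suffices to show that along any such sequence the limsup of the rejection probability is at most $\alpha$.

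Under $H_0$, set $\varepsilon_j=Y_j'b_0$ and $\sigma^2(Z_j)=b_0'\Omega(Z_j)b_0$. These satisfy $\E(\varepsilon_j|Z_j)=0$ and $\E(\varepsilon_j^2|Z_j)=\sigma^2(Z_j)$. By Assumption \ref{ass:variance}(ii) and compactness of ${\cal B}$, $\sigma^2$ is bounded above and below uniformly.

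\textbf{Step 1: replace $\widehat{\Omega}$ by $\Omega$.} Write
\[
\HICM(\beta_0)=\int \Bigl|n^{-1/2}\sum_j \widehat\sigma^{-1}(Z_j)\varepsilon_j e^{is'Z_j}\Bigr|^2 d\mu(s),
\]
and let $\widetilde{\HICM}$ denote the same quantity with $\sigma$ in place of $\widehat\sigma$. Consider the empirical process
\[
\nu_n(g,s)=n^{-1/2}\sum_j g(Z_j)\varepsilon_j e^{is'Z_j},
\]
indexed by $g\in\{(b'\Omega_1(\cdot) b)^{-1/2}:\Omega_1\in{\cal O},\,b\}$ and $s$ in the bounded support of $\mu$.

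The covering condition $V<2$ in Assumption \ref{ass:variance}(ii), the smoothness in $s$ (since $Z$ is bounded), and the $2+\delta$ moments give a uniform entropy bound. This yields uniform (in $P$) stochastic equicontinuity of $\nu_n$, via a Lindeberg-type functional CLT for triangular arrays as in \cite{AndrewsHB} or \cite{van_der_vaart_bracketing_1994}. Assumption \ref{ass:variance}(iii)--(iv) puts $\widehat\sigma^{-1}$ in the class with probability tending to one, and $L^2(P)$-close to $\sigma^{-1}$. Hence $\sup_s|\nu_n(\widehat\sigma^{-1},s)-\nu_n(\sigma^{-1},s)|=o_p(1)$ uniformly over ${\cal P}$. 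Since $\nu_n(\sigma^{-1},\cdot)$ is $O_p(1)$ uniformly and $\mu$ is a probability measure, we get $\HICM(\beta_0)-\widetilde{\HICM}(\beta_0)=o_p(1)$ uniformly.

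I expect this step to be the main obstacle. Its key requirement is equicontinuity that is uniform over the drifting sequence $P_n$. The first-stage class ${\cal E}$ of Assumption \ref{ass:pi} enters only through bounds on $\varepsilon$, since $\varepsilon=Y'b_0$ does not involve $\Pi$ under $H_0$.

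\textbf{Step 2: Gaussian coupling conditional on $Z$.} Let $\xi_j=\varepsilon_j/\sigma(Z_j)$. Conditional on $Z$, the $\xi_j$ are independent with mean $0$ and variance $1$, with uniformly bounded $2+\delta$ moments, and $\widetilde{\HICM}=\xi'W\xi$. Eigen-decompose $W=\sum_k\lambda_k v_kv_k'$, where $\lambda_k\geq0$ because $w$ has a nonnegative Fourier transform. The sum $\sum_k\lambda_k$ equals $w(0)$ and is bounded. Since $\mu$ is a probability measure, the trace of $W$ is bounded as well.

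We then compare $\xi'W\xi$ with $G'WG$. This can be done with a Lindeberg replacement argument for quadratic forms, or equivalently by noting that the process $s\mapsto n^{-1/2}\sum_j\xi_j e^{is'Z_j}$ and its Gaussian analogue share the conditional covariance kernel $n^{-1}\sum_j e^{i(s-t)'Z_j}$. The comparison gives
\[
\sup_x\bigl|\Pr(\xi'W\xi\le x\mid Z)-\Pr(G'WG\le x\mid Z)\bigr|\to0
\]
in probability, uniformly in $P$. For the sup over $x$, we use that the limiting law (a weighted chi-square sum with nondegenerate weights) is continuous.

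\textbf{Step 3: conclusion.} The quantity $c_{1-\alpha}(Z)$ is exactly the conditional $(1-\alpha)$ quantile of $G'WG$. Combining Steps 1--2 with continuity of that conditional law at $c_{1-\alpha}(Z)$ gives $\Pr[\HICM(\beta_{0,n})>c_{1-\alpha}(Z)]\le\alpha+o(1)$. We handle the $o_p(1)$ perturbation from Step 1 by a standard $\epsilon$-shift of the critical value, using anti-concentration of $G'WG$, which has at least one nonvanishing eigenvalue since $\lambda_{\max}(W)$ stays bounded away from zero because $\E w(Z_1-Z_2)>0$. This contradicts the choice of the sequence and proves the theorem.
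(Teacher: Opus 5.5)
Your proof is correct. It takes a genuinely different route from the paper's at the key step.

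\textbf{The paper's route.} It first shows, via BUEI/uniform-Donsker arguments, that two processes converge weakly, uniformly in $P$, to the same centered Gaussian process $\mathbb{G}_S$: the studentized process $\widehat h_{\beta_0}(s)=n^{-1/2}\sum_i \widehat S_i e^{is'Z_i}$ and the multiplier process $h_G(s)=n^{-1/2}\sum_i G_i e^{is'Z_i}$. It then compares the \emph{unconditional} laws of $\HICM(\widehat h_{\beta_0})$ and $\HICM(h_G)$ in the bounded-Lipschitz metric. This uses a continuous truncation $\HICM_F$, a Lipschitz property of its (deterministic) quantile, boundedness in probability of $\HICM(\beta_0)$, and a contradiction argument.

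\textbf{Your route.} Your Step 1 coincides with the paper's treatment of $\widehat\Omega$: uniform stochastic equicontinuity plus Assumption \ref{ass:variance}(iii)--(iv). After that you never pass through a limit process. Conditionally on $Z$, you compare $\xi'W\xi$ with $G'WG$ for the \emph{same} matrix $W$, by Lindeberg replacement of the $L^2(\mu)$-valued summands $n^{-1/2}\xi_j e^{is'Z_j}$. The first two conditional moments match exactly, and the remainder is $O(n^{-\delta/2})$ under the $2+\delta$ moment bound. You then pass to Kolmogorov distance, and absorb the $o_p(1)$ error of Step 1, through the explicit anti-concentration bound
\[
\sup_x\Pr(x-\epsilon<G'WG\le x\mid Z)\le\Pr\bigl(\chi^2_1\le\epsilon/\lambda_{\max}(W)\bigr),
\]
together with $\lambda_{\max}(W)\ge n^{-2}e'We\to\E w(Z_1-Z_2)>0$.

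\textbf{What your route buys.}
\begin{itemize}
\item It works directly with the data-dependent critical value $c_{1-\alpha}(Z)$, which is a conditional quantile. The paper's proof works with a fixed quantile of $\HICM(h_G)$ and so implicitly relies on the conditional quantile settling down.
\item It makes clear that Assumption \ref{ass:pi} plays no role for size, since $Y'b_0$ is already centered under $H_0$.
\item It replaces the truncation/Lipschitz-quantile device with a one-line bound.
\end{itemize}
Two parts of your write-up are superfluous. The appeal to a continuous ``limiting law'' in Step 2 is not needed, because the uniform anti-concentration in Step 3 is what does the work. The compactness extraction of $\beta^*$ is likewise unnecessary, since all your bounds are uniform.

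\textbf{What the paper's route buys.} It establishes uniform weak convergence of $\widehat h$-type processes. The paper reuses this for the power results, where $\widehat h_{Y_2}$ and the first-stage class ${\cal E}$ genuinely enter.
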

Our theorem readily implies that our test is asymptotically valid
whatever the identification strength. Indeed, for any sequence
$\Pi_{n}(\cdot), n \geq 1$, of functions in ${\cal E}$, that can
decrease in norm to zero arbitrarily fast, our result yields
asymptotic validity under this sequence, see  \citet[Chap. 2.8]{vaart_weak_2000}.
The result also readily implies the uniform asymptotic validity of our
subvector inference and
specification tests.
\begin{corollary} Under Assumptions \ref{ass:iid}, \ref{ass:pi},
\ref{ass:variance}, and \ref{ass:w},
\begin{align*}
(i) &
\limsup_{n\rightarrow \infty} \sup_{\beta_{0}  \in {\cal B}}
\sup_{P \in {\cal P}_{\beta_{0}}}
\Pr\left[ \HICM^*(\beta_{0,1}) > c_{1-\alpha}(Z) \right]  \leq
\alpha
\, ,
\\
(ii) &
\limsup_{n\rightarrow \infty} \sup_{\beta_{0}  \in {\cal B}}
\sup_{P \in {\cal P}_{\beta_{0}}}
\Pr\left[ \HICM^* > c_{1-\alpha}(Z) \right]   \leq
\alpha
\, .
\end{align*}
\label{th:levelspec}
\end{corollary}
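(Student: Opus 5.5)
The plan is to reduce both parts of Corollary \ref{th:levelspec} to Theorem \ref{th:level} by a pathwise domination argument, so that no new asymptotics are needed. Fix $\beta_0=(\beta_{0,1}',\beta_{0,2}')'\in{\cal B}$ and $P\in{\cal P}_{\beta_0}$. By definition, $\HICM^*(\beta_{0,1})=\min_{\beta_2}\HICM(\beta_{0,1},\beta_2)$, and the true nuisance value $\beta_{0,2}$ is one admissible choice of $\beta_2$. Hence, for every sample,
\[
\HICM^*(\beta_{0,1})\le \HICM(\beta_{0,1},\beta_{0,2})=\HICM(\beta_0).
\]
Likewise $\HICM^*=\min_\beta \HICM(\beta)\le \HICM(\beta_0)$. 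To make sure the minimum makes sense, I will take the minimization over the compact set ${\cal B}$, or its relevant section. The map $\beta\mapsto\HICM(\beta)$ is continuous there, because $b'\widehat\Omega(Z_j)b\ge \underline\lambda\|b\|^2/2>0$ with probability tending to one, uniformly in $P$, by Assumption \ref{ass:variance}(i)--(iii). So the minimum is attained. If one prefers not to use attainment, an infimum gives the same inequality.

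The second ingredient is that the critical value $c_{1-\alpha}(Z)$ is the quantile of $G'WG$. This quantity depends only on $Z$ and $w$, not on $\beta$. It is therefore the \emph{same} random threshold in Theorem \ref{th:level} and in both parts of the corollary. Combining this with the inequalities above gives event inclusions:
\[
\{\HICM^*(\beta_{0,1})>c_{1-\alpha}(Z)\}\subseteq\{\HICM(\beta_0)>c_{1-\alpha}(Z)\},
\]
and the same inclusion holds with $\HICM^*$ on the left. Taking probabilities, then $\sup_{P\in{\cal P}_{\beta_0}}$, then $\sup_{\beta_0\in{\cal B}}$, and then $\limsup_n$, preserves the inequality. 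Theorem \ref{th:level} then bounds the right-hand side by $\alpha$, which proves (i) and (ii).

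The only delicate point, and so the main obstacle, is measurability and well-definedness of the minimized statistics. This has two parts. First, the event that $\widehat\Omega\in{\cal O}$, which gives uniform positivity of the normalizations, has probability tending to one uniformly over ${\cal P}$ by Assumption \ref{ass:variance}(iii). On its complement I will bound the probability crudely by its vanishing mass, which does not affect the $\limsup$. Second, measurability of the minimum over a compact set follows from continuity in $\beta$: the minimum over ${\cal B}$ equals the minimum over a countable dense subset. Otherwise, outer probabilities can be used, as in \citet[Chap.~2.8]{vaart_weak_2000}, since the inclusion argument works equally with outer measure.
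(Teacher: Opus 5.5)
Your proof is correct and is essentially the paper's own argument. The paper also rests on the pathwise bounds $\HICM^*(\beta_{0,1})\le\HICM(\beta_{0,1},\beta_{0,2})$ and $\HICM^*\le\HICM(\beta_0)$, the $\beta$-free critical value $c_{1-\alpha}(Z)$, and Theorem~\ref{th:level}; your remarks on attainment, measurability, and the vanishing event $\widehat\Omega\notin{\cal O}$ add care the paper leaves implicit.
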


\subsection{Asymptotic power}

We adopt here a large local alternatives setup similar to
\cite{BP97}.
\begin{hp}
\label{ass:pipower}
There exists a fixed matrix
$C(\cdot)$ such that $\E C(Z) C'(Z)$ is bounded
 and positive definite and either
 (i) $\Pi (Z) =  \tilde{c}_{n} \ \frac{C(Z_{i})}{\sqrt{n}}$, where
 $\tilde{c}_n$ is a sequence of positive real numbers,  or (ii) $\Pi (Z) =  {C(Z_{i})}$.
\end{hp}
 Condition (i) allows studying the power of our tests against weak and
semi-strong identification when considering  a test of $H_{0}: \
\beta = \beta_{1}$ where $\beta_{1} \neq \beta_{0}$, the true parameter value.
Condition (ii) is the strong identification case, and  we
consider local alternatives $H_{1n}: \ \beta_{1n} = \beta_{0} +
\tilde{c}_{n} \frac{\delta}{\sqrt{n}}$, where $\delta \neq 0$ is
fixed.  In both cases, the object of interest is the asymptotic power
of our two tests when $\tilde{c}_{n}$ becomes large.
\begin{theor}
Under  Assumptions \ref{ass:iid},  \ref{ass:variance}, and  \ref{ass:w},
\begin{itemize}
\item[(i)] Under Assumption \ref{ass:pipower}-(i), for  any fixed  $\beta_1\neq\beta_0$,
\[
\liminf_{\tilde{c}_n \rightarrow + \infty}
\inf_{P \in {\cal P}_{\beta_{0}}} \Pr\left[ \HICM (\beta_{1}) >
c_{1-\alpha} (Z) \right]  = 1
\, .
\]
\item[(ii)]
Under Assumption \ref{ass:pipower}-(ii), for  $\beta_{1n}  =
\beta_{0} + \tilde{c}_{n} \frac{\delta}{\sqrt{n}}$ with a fixed $\delta \neq 0$,
\[
\liminf_{\tilde{c}_n \rightarrow + \infty}
\inf_{P \in {\cal P}_{\beta_{0}}} \Pr\left[
\HICM (\beta_{1n}) > c_{1-\alpha}(Z) \right]  = 1
\, .
\]
\end{itemize}
\label{th:power}
\end{theor}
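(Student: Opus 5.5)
The plan is to decompose $\HICM(\beta_1)$ into a ``noise'' quadratic form and a ``signal'' quadratic form, show that the signal term diverges as $\tilde c_n\to\infty$ while the noise part stays bounded in probability, and compare with a critical value $c_{1-\alpha}(Z)$ that is $O_p(1)$.

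\textbf{Decomposition.} Under the true $\beta_0$,
\[
Y_j'b_1 = \Pi'(Z_j)(\beta_0-\beta_1) + \varepsilon_j,
\qquad \E(\varepsilon_j|Z_j)=0 ,
\]
with $\varepsilon_j = u_j + V_{2j}'(\beta_0-\beta_1)$. Let $\sigma_j = (b_1'\widehat\Omega(Z_j)b_1)^{-1/2}$, and write $a_j = \sigma_j \Pi'(Z_j)(\beta_0-\beta_1)$ and $e_j=\sigma_j\varepsilon_j$. Then $\HICM(\beta_1) = (a+e)'W(a+e)$.

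\textbf{Critical value.} The matrix $W$ is positive semidefinite, because $w$ has a nonnegative Fourier transform (Assumption \ref{ass:w}). Hence $\sqrt{\HICM}\ge \sqrt{a'Wa}-\sqrt{e'We}$ by the triangle inequality for the seminorm $x\mapsto\sqrt{x'Wx}$. For the critical value, $\E\, G'WG = \mathrm{tr}(W)=w(0)$ is bounded, so by Markov $c_{1-\alpha}(Z)\le w(0)/\alpha$ deterministically.

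\textbf{Noise term.} Conditional on $Z$ and replacing $\widehat\Omega$ by $\Omega$, we have $\E(e'We\,|\,Z)=n^{-1}\sum_j w(0)\sigma_j^2\E(\varepsilon_j^2|Z_j)$. This is bounded by Assumptions \ref{ass:iid} and \ref{ass:variance}(ii), since $\sigma_j^2\le 1/(\underline\lambda\|b_1\|^2)$ and the conditional second moments are bounded. The replacement of $\widehat\Omega$ by $\Omega$ is handled as in the proof of Theorem \ref{th:level}, via Assumption \ref{ass:variance}(i),(iii),(iv); alternatively one can keep $\widehat\Omega$ and bound $\sigma_j$ uniformly on the event $\widehat\Omega\in\mathcal O$. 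Either way $e'We=O_p(1)$ uniformly in $P$.

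\textbf{Signal term.} In case (i), $a_j = \tilde c_n n^{-1/2}\sigma_j C'(Z_j)(\beta_0-\beta_1)$. In case (ii), $\Pi=C$ and $\beta_0-\beta_1 = -\tilde c_n\delta/\sqrt n$, so $a_j = -\tilde c_n n^{-1/2}\sigma_j C'(Z_j)\delta$; moreover $b_{1n}\to b_0$ there, so $\sigma_j$ converges to the value at $b_0$. In both cases $a'Wa = \tilde c_n^2\, T_n$ with
\[
T_n = n^{-2}\sum_{j,m} g(Z_j)g(Z_m)w(Z_j-Z_m),
\]
where $g(z)=\sigma(z)C'(z)d$, $d$ is a fixed nonzero vector, and $\sigma(\cdot)$ is replaced by its estimated version.

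\textbf{Main obstacle: $T_n$ stays away from zero.} By a V-statistic law of large numbers (using the boundedness of $w$ and $\E CC'<\infty$),
\[
T_n \to \E\bigl[g(Z_1)g(Z_2)w(Z_1-Z_2)\bigr] = \int |\E[g(Z)e^{is'Z}]|^2\,d\mu(s).
\]
This limit is strictly positive because $g\neq0$ on a set of positive probability (as $\E CC'$ is positive definite and $\sigma$ is bounded below) and, by Bierens' characterization with a bounded $Z$ and $\mu$ supported on a neighbourhood of $0$, $\E[g(Z)e^{is'Z}]$ cannot vanish $\mu$-a.e. One must also control the estimated weights $\widehat\sigma$ inside $T_n$. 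I would do this by showing $T_n(\widehat\sigma)-T_n(\sigma)\to0$, using the bound $|T_n(\widehat\sigma)-T_n(\sigma)|\le \sup|w|\cdot n^{-2}\sum_{j,m}|\widehat g_j\widehat g_m-g_jg_m|$ together with Cauchy--Schwarz and Assumption \ref{ass:variance}(iv). Uniformity over $P$ requires a uniform lower bound on the limit. I would obtain it from the uniform positive definiteness of $\E CC'$ and the uniform eigenvalue bounds on $\Omega$, but this is the delicate point, and at worst it may require restricting $\mathcal P$ so that the $C$-dependent constant is bounded below.

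\textbf{Conclusion.} With $T_n\ge \kappa>0$ with probability tending to one, $\sqrt{\HICM(\beta_1)}\ge \tilde c_n\sqrt\kappa - O_p(1)$, which exceeds $\sqrt{w(0)/\alpha}$ with probability tending to one as $\tilde c_n\to\infty$. This proves both (i) and (ii).
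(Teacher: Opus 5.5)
Your proof is correct and uses the same mechanism as the paper's. A signal term of order $\tilde c_n^2$, whose normalized limit $\int |\E[g(Z)e^{is'Z}]|^2\,d\mu(s)$ is strictly positive by Bierens' argument, is set against a noise term and a critical value that both stay bounded.

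The difference is in the decomposition. The paper expands $\HICM(\beta_1)$ around the null statistic as $\HICM(\beta_0) - 2\ip{\widehat h_{\beta_0}}{(\beta_1-\beta_0)'\widehat h_{Y_2}} + \|(\beta_1-\beta_0)'\widehat h_{Y_2}\|^2$. It controls $\HICM(\beta_0)$ by Lemma \ref{lem:boundicm} and concludes by contradiction along subsequences. As written, this expansion uses the weights $(b_0'\widehat\Omega b_0)^{-1/2}$ inside $\widehat h_{Y_2}$, whereas $\HICM(\beta_1)$ is built with $(b_1'\widehat\Omega b_1)^{-1/2}$. You instead split $Y_j'b_1$ into its conditional mean and innovation at $b_1$ itself, which buys three things:
\begin{itemize}
\item the identity $\HICM(\beta_1)=(a+e)'W(a+e)$ is exact;
\item positive semidefiniteness of $W$ lets the triangle inequality for $x\mapsto (x'Wx)^{1/2}$ replace the cross-term bookkeeping;
\item $\mathrm{tr}(W)=w(0)$ gives the deterministic bound $c_{1-\alpha}(Z)\le w(0)/\alpha$ in place of ``bounded in probability''.
\end{itemize}
Your route is therefore a slightly cleaner and more careful version of the same argument.

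Two small remarks.

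\textbf{Noise term.} Only your first option works. Once $\widehat\Omega$ is kept, $\widehat\sigma_j$ depends on the whole sample, so the cross terms $\widehat\sigma_j\widehat\sigma_m\varepsilon_j\varepsilon_m$ no longer have zero conditional mean. Bounding $\widehat\sigma_j$ on $\{\widehat\Omega\in\mathcal O\}$ then only yields $O_p(n)$. You need the stochastic-equicontinuity replacement of Section \ref{sec:ucp}, exactly as in Lemma \ref{lem:boundicm}.

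\textbf{Uniformity of the lower bound.} The paper asserts this without argument, and you do not need to restrict $\mathcal P$. Under Assumptions \ref{ass:iid} and \ref{ass:pipower}, both $C$ and the law of $Z$ are fixed, so the limit depends on $P$ only through $\Omega\in\mathcal O$. The entropy bound in Assumption \ref{ass:variance}-(ii) makes $\mathcal O$ totally bounded in $L^2$, and its eigenvalue bounds survive $L^2$ limits (along a.e.-convergent subsequences). The functional $\Omega\mapsto\int|\E[(b_1'\Omega(Z)b_1)^{-1/2}C'(Z)d\,e^{is'Z}]|^2\,d\mu(s)$ is $L^2$-continuous and strictly positive on the resulting compact closure, so its infimum is attained and positive.
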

 Result (i) shows that, under weak identification, power is non-trivial
for a large enough $\tilde{c}_{n}$.  Result
(ii) implies that, under strong identification, power is non-trivial under
a sequence of Pitman local alternatives for $\tilde{c}_{n}$ large enough.
A similar statement can be derived for our subvector inference procedure.

\subsection{Asymptotic power of specification test}

We  consider a sequence of local alternatives
\[
H_{1,n}: \
\min_{b = (1, - \beta')' : \beta \in {\cal B}}
\int_{}^{}{ \left| \E \left[ (b' \Omega(Z) b)^{-1/2}  (y -
Y'_{2}\beta) \exp(is'Z)  \right] \right|^2 \, d\mu(s) }  =    \tilde{d}_n^2 / {n}
\, ,
\]
where $ \tilde{d}_n, n = 1, \ldots$ is a positive real sequence uniformly
bounded away from zero. We remain agnostic on the source of
misspecification and on identification strength.
We denote by ${\cal P}$ the class of distributions on which our
observations lie under Assumptions \ref{ass:iid}-(i), \ref{ass:pi} and
\ref{ass:variance}.

\begin{theor}
Under  Assumptions \ref{ass:iid}-(i),  \ref{ass:pi},
\ref{ass:variance}, and  \ref{ass:w},
\[
\liminf_{\tilde{d}_n \rightarrow +\infty} \inf_{P \in {\cal P} \cap H_{1,n}}
\Pr\left[ \HICM^* > c_{1-\alpha} (Z) \right]  = 1
\, .
\]
\label{th:powerspec}
\end{theor}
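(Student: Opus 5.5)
The plan is to bound $\HICM^*$ from below by a quantity whose minimum over $\beta$ grows like $\tilde d_n^2$, while $c_{1-\alpha}(Z)$ stays $O_p(1)$ uniformly.

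\textbf{Step 1: rewrite as a process.} By Assumption \ref{ass:w}, $w$ is the Fourier transform of the symmetric density $\mu$, so
\[
\HICM(\beta)=\int \Bigl| n^{-1/2}\sum_{j=1}^n \widehat\sigma_\beta(Z_j)^{-1}\, (Y_j'b)\, e^{is'Z_j}\Bigr|^2 d\mu(s),
\qquad \widehat\sigma_\beta^2(z)=b'\widehat\Omega(z)b .
\]
Write $m_n(\beta,s)=\E\bigl[\sigma_\beta(Z)^{-1}(Y'b)e^{is'Z}\bigr]$, where $\sigma_\beta^2=b'\Omega b$. Define the centered process
\[
\nu_n(\beta,s)=n^{-1/2}\sum_j\bigl\{\sigma_\beta(Z_j)^{-1}(Y_j'b)e^{is'Z_j}-m_n(\beta,s)\bigr\}.
\]

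\textbf{Step 2: decomposition and triangle inequality.} In $L^2(\mu)$ norm,
\[
\HICM(\beta)^{1/2}\ \ge\ \sqrt n\,\|m_n(\beta,\cdot)\|-\sup_{\beta\in\mathcal B}\|\nu_n(\beta,\cdot)\|-\sup_{\beta\in\mathcal B}\|R_n(\beta,\cdot)\|,
\]
where $R_n$ is the error from replacing $\widehat\Omega$ by $\Omega$. By $H_{1,n}$, $\inf_\beta \sqrt n\|m_n(\beta,\cdot)\|=\tilde d_n$. It therefore suffices to show two things, both uniformly over $P\in\mathcal P$:
\[
\sup_\beta\|\nu_n\|=O_p(1),\qquad \sup_\beta\|R_n\|=o_p(1)\ \text{or at least } O_p(1).
\]
Then $\HICM^*\ge(\tilde d_n-O_p(1))^2$.

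\textbf{Step 3: uniform tightness of $\nu_n$ (the main obstacle).} Write $Y'b=(y-Y_2'\beta)$. The class of summands
\[
\bigl\{\sigma_\beta^{-1}\,(Y'b)\,\cos/\sin(s'Z)\ :\ \beta\in\mathcal B,\ s\in\operatorname{supp}\mu\bigr\}
\]
is Lipschitz in $(\beta,s)$ on compact sets, since $\operatorname{supp}\mu$ is bounded and $\sigma_\beta\ge\underline\lambda^{1/2}\|b\|\ge\underline\lambda^{1/2}$. The Lipschitz constant is of order $\|Y\|(1+\|Z\|)$, which has bounded $2+\delta$ moments uniformly by Assumption \ref{ass:iid}. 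There is one subtlety: $\Omega$ itself may vary with $P$. It does so within $\mathcal O$, but for each $P$ only one $\Omega$ enters, and the bounds depend only on $\underline\lambda,\overline\lambda$ and $M'$. A uniform (over $\mathcal P$) maximal inequality for Lipschitz-parametric classes with a square-integrable envelope, together with $\E\|\nu_n(\beta,\cdot)\|^2\le \E[\sigma_\beta^{-2}(Y'b)^2]$, then gives $\E\sup_\beta\|\nu_n\|\le C$ independent of $P$. Here Assumption \ref{ass:pi} and the entropy part of Assumption \ref{ass:variance} are used exactly as in the proof of Theorem \ref{th:level}.

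\textbf{Step 4: the estimation error $R_n$.} Write
\[
R_n(\beta,s)=n^{-1/2}\sum_j(\widehat\sigma_\beta^{-1}-\sigma_\beta^{-1})(Z_j)\,(Y_j'b)\,e^{is'Z_j}.
\]
Split $Y_j'b=\E(Y_j'b\mid Z_j)+\varepsilon_j(\beta)$. The $\varepsilon$ part is a centered empirical process indexed by $\widehat\Omega\in\mathcal O$. It is $o_p(1)$ by stochastic equicontinuity (entropy $V<2$ in Assumption \ref{ass:variance}(ii)--(iii)) combined with the $L^2$ consistency in Assumption \ref{ass:variance}(iv), again as in Theorem \ref{th:level}.

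The conditional-mean part is the delicate one. Its population analogue is
\[
\sqrt n\,\E\bigl[(\widehat\sigma_\beta^{-1}-\sigma_\beta^{-1})\,\sigma_\beta\cdot \sigma_\beta^{-1}\E(Y'b\mid Z)\,e^{is'Z}\bigr],
\]
and it is not negligible when $m_n$ is large. I would avoid this issue rather than bound it. Since $\widehat\sigma_\beta/\sigma_\beta\in[1-\eta,1+\eta]$ uniformly with probability approaching one (Assumption \ref{ass:variance}(i)), I compare instead with the statistic weighted by $\widehat\sigma_\beta^{-1}$ and use the multiplicative bound. Writing $a_j=\sigma_\beta^{-1}(Z_j)\E(Y_j'b\mid Z_j)$, the $\mu$-integrated quadratic form $\iint a_ja_m w(Z_j-Z_m)$ is positive semidefinite in the vector $a$. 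Perturbing the weights by factors in $[1-\eta,1+\eta]$ changes its square root by at most a factor $(1+O(\eta))$ relative to $\sqrt n\|m_n\|+O_p(1)$. This uses $w\ge0$ and boundedness. Hence $\HICM^*\ge (1-O(\eta))^2(\tilde d_n-O_p(1))^2$.

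\textbf{Step 5: conclusion.} The critical value is the $(1-\alpha)$ quantile of $G'WG$. Since $\E\, G'WG=\operatorname{tr}W=w(0)<\infty$, Markov's inequality gives $c_{1-\alpha}(Z)\le w(0)/\alpha$, deterministically. Letting $\tilde d_n\to\infty$ yields
\[
\inf_{P}\Pr\bigl[\HICM^*>c_{1-\alpha}(Z)\bigr]\to1.
\]
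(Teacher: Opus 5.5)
You take a genuinely different route from the paper, and it fails at one step. The paper proves only $\sup_{\beta\in{\cal B}}|n^{-1}\HICM(\beta)-H(\beta)|=o_p(1)$ uniformly in $P$, then argues by contradiction. At that scale the error from $\widehat\Omega$ is harmless. However, the contradiction needs $n^{-1}c_{1-\alpha}(Z)-\min_\beta H(\beta)=(c_{1-\alpha}(Z)-\tilde d_n^2)/n$ to stay below a negative constant, i.e.\ $\tilde d_n^2/n$ bounded away from zero. Your $\sqrt n$-scale bound $\HICM(\beta)^{1/2}\ge\sqrt n\|m_n\|-\|\nu_n\|-\|R_n\|$ is the right tool for genuinely local alternatives. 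Steps 1--3, Step 5 and the $\varepsilon$-part of Step 4 are sound. The Markov bound $c_{1-\alpha}(Z)\le w(0)/\alpha$, from $\E G'WG=\operatorname{tr}W=w(0)$, is a clean deterministic substitute for ``bounded in probability''. The gap is the conditional-mean part of $R_n$ in Step 4, the step you yourself flag as delicate.

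\textbf{The Step 4 bound fails.} Your multiplicative claim is false for sign-changing vectors. With $w\ge0$ you only get
\[
\sqrt{Q(ra)}\ \ge\ \sqrt{Q(a)}-\eta\sqrt{Q(|a|)},\qquad Q(a)=n^{-1}\sum_{j,m}a_ja_m\,w(Z_j-Z_m),
\]
and $Q(|a|)/Q(a)$ is unbounded. Take two observations with $w(Z_1-Z_2)=w(0)-\epsilon$, $a=(1+\eta,-(1-\eta))$ and $r=((1+\eta)^{-1},(1-\eta)^{-1})$. Then $\sum_{j,m}a_ja_mw(Z_j-Z_m)$ equals $4\eta^2w(0)+2(1-\eta^2)\epsilon$ before the perturbation and $2\epsilon$ after. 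So the square root can shrink by an arbitrarily large factor.

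In population terms, write $\|h\|_w^2=\E[h(Z_1)h(Z_2)w(Z_1-Z_2)]$ with $Z_1,Z_2$ independent copies of $Z$, and $a_\beta(z)=\sigma_\beta^{-1}(z)\E(Y'b\mid Z=z)$. The perturbation error is of order $\eta\sqrt n\,\|\,|a_\beta|\,\|_w$, whereas $H_{1,n}$ only gives $\sqrt n\,\|a_\beta\|_w\ge\tilde d_n$. Consider misspecification of order one in $L^2$ but of order $\tilde d_n/\sqrt n$ in the ICM metric, such as high-frequency deviations. Then the error is of order $\eta\sqrt n$, and Assumption \ref{ass:variance} gives no rate for $\eta$. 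Hence $\HICM^*\ge(1-O(\eta))^2(\tilde d_n-O_p(1))^2$ does not follow.

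\textbf{What would close it.} You need an extra ingredient that makes $\eta\,\|\,|a_\beta|\,\|_w$ negligible relative to $\|a_\beta\|_w$ uniformly in $\beta$. Either a rate for $\widehat\Omega$ or a restriction on the alternatives bounding $\|\,|a_\beta|\,\|_w/\|a_\beta\|_w$ would do. Such a bound holds, with a $P$-dependent constant, in the paper's example with fixed $\delta(\cdot)$ and $C(\cdot)$. There $a_\beta$ ranges over a fixed finite-dimensional span, on which $\|\cdot\|_w$ is a norm by Bierens' argument and dominates $\|\,|\cdot|\,\|_w$. Your argument therefore gives power along such sequences, but not the statement uniform over ${\cal P}\cap H_{1,n}$ as written.
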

To understand what our result implies, assume that
\[
\E( y_i | Z_i) = \Pi(Z_i)'\beta_0 + \delta_n(Z_i), \quad i = 1, \ldots n
\, .
\]
This encompasses situations where $Z$ has a direct effect on $y$, and
is thus endogenous, or situations where the relationship between $y$
and $Y_2$ is nonlinear, and thus not appropriately modeled.  If
$\delta_n(Z) = \tilde{d}_n \delta(Z)/\sqrt{n}$ for a non-zero function
$\delta(Z)$, then our specification test has some non-trivial power
for $\tilde{d}_n$ large enough, provided $\delta(\cdot)$ is not
linearly related to the components of $\Pi(\cdot)$, which is implicit
in the formulation of $H_{1,n}$.  This holds whether $\Pi(\cdot)$ is
fixed as in Assumption \ref{ass:pipower}-(ii),  or whether $\Pi(Z) =
\frac{C(Z)}{\sqrt{n}}$  as in Assumption \ref{ass:pipower}-(i).  In that sense, our
specification test is really robust to identification strength.

\section{Small sample behavior}\label{sec.simulation}

We generate data following the model
\begin{eqnarray}
y_i &=& \alpha_0 + Y_{2i}\beta_0 + \delta Z_i^2 + \sigma(Z_i) u_{i} \, ,\label{eq:sim} \\
Y_{2i} &=& \gamma_0 + \frac{c}{\sqrt{n}} f(Z_i) + \sigma(Z_i) v_{2i} \, , \nonumber
\end{eqnarray}
where $c$ is a constant that controls the strength of the
identification and $Y_{2i}$ is univariate.  The joint distribution of
$(u_{i},v_{2i})$ is a bivariate normal with mean $\mathbf{0}$, unit
unconditional variances, and unconditional correlation $\rho$. We set
$\alpha_0=\beta_0=\gamma_0=0$ and $\rho=0.8$.  We consider three
different specifications for the function $f(\cdot)$: (i) a polynomial
function of degree 3 proportional to $z- 2z^3/5$, (ii) a linear
function, and (iii) a function compatible with first-stage group heterogeneity,
see \cite{Abadie2024}, proportional to $\left( 2 z_2 - 1\right) \left( z_1-2z_1^3/5
\right)$. Here $Z$ (or $Z_1$) is deterministic with values
evenly spread between -2 and 2 and $Z_{2}$ follows a Bernoulli with
probability 1/2. Also, $f(Z)$ is centered and scaled to have variance
one to ensure that the different cases are comparable.  We consider
heteroskedasticity depending on the first component of $Z$ of the form
\[
\sigma(z)= \sqrt{\frac{3(1+z^2)}{7}} \, .
\]
Finally, $\delta$ controls the degree of misspecification. When
$\delta = 0$, the model that excludes $Z$ from the structural equation
is well-specified and  similar to the one considered by
\cite{antoine_identification-robust_2023}, while it is misspecified
when $\delta \neq 0$.
In all our experiments, $w(\cdot) = \operatorname{sinc}(\pi \cdot)$, which
corresponds to a uniform density $\mu$, while conditional covariances
are estimated through kernel smoothing
with a Gaussian kernel and rule-of-thumb bandwidth. We consider 5,000
replications.

\subsection{Inference on parameters in a correctly specified model}
\label{subsec:sim idr inference}

For this subsection, we set $\delta=0$ to ensure that the model is
always correctly specified, and we focus on delivering inference on
$\beta_0$. We compare the performance of the HICM test introduced in
this paper and the ICM one.  Simulations results reported in
\cite{antoine_identification-robust_2023} show that among competing
procedures, such as the heteroskedasticity-robust conditional likelihood ratio test, the
S test proposed by \cite{SW} is the only one that controls size well
when increasing the number of instruments, so we focus on this test
in our comparison.

We consider three versions of S, with 1, 3, or 7 instruments
in the polynomial and linear models. These instruments were obtained by
fitting piecewise linear functions on intervals defined by the
quartiles of $Z$, e.g., the three considered instruments are
$1(z\leq 0)$, $z \times 1(z\leq0)$, and $z$. For the group
heterogeneity model,  we implement S based on a reduced form with 3
instruments, namely the continuous variable $Z_1$, the discrete one
$Z_2$, and an interaction term. We then increase the number
of instruments to 7 and 15. We construct these instruments as
piecewise linear and interaction terms on intervals defined by the
quartiles of $Z_1$, e.g., the seven considered instruments are
$1(z_1\leq0)$, $z_1 \times 1(z_1 \leq 0)$, $z_1$, $z_2 \times 1(z_1
\leq 0)$, $z_2$, $z_2\times z_1 \times 1(z_1\leq 0)$, $z_2\times
z_1$. The critical value for the S statistic is  the quantile
from the  chi-square distribution with degrees of freedom equal to the
number of instruments.
For ICM, we compute p-values based on
499 simulated values of the statistic for each
considered hypothesis $H_0:\beta=\bar\beta$. For our new
test based on HICM, we compute 1999 simulated values of the
statistic, which are independent of the value of $\beta$ under $H_0$.

Our benchmark is the heteroskedastic version of the polynomial model
with a degree of weakness $c = 3$ and a sample size $n = 201$. We then
increase $c$ to 7, then we consider a linear model with $c=3$, and a
polynomial group model with $c=3$ and $n=401$.  The empirical sizes
are reported in Table \ref{table:sizes}, for one variable $Z$ with
$n=201$, or two variables $Z_1$ and $Z_2$ with $n=401$ (the
empirical levels under the null hypothesis $H_0: \beta = 0$ do not
depend upon other simulation details).  Size is controlled by the
three procedures, but HICM tends to be more conservative than
S, and ICM is the most conservative.

\begin{table}[!ht]
\centering
\scalebox{.9}{
\begin{tabular}[ht]{lrrrrr}
\hline
One variable $Z$ and $n=201$ & HICM & ICM & S-1IV & S-3IV & S-7IV\\
\hline
 & 4.12 & 3.10 & 4.98 & 4.72 & 4.14\\
 & 7.18 & 6.02 & 9.62 & 9.82 & 9.44\\
 \hline
Two variables $Z_1$ and $Z_2$ and $n=401$  & HICM & ICM & S-3IV & S-7IV & S-15IV\\
  \hline
 & 4.50 & 3.72 & 5.12 & 4.48 & 3.62\\
 & 7.98 & 8.02 & 9.92 & 9.22 & 8.74\\
\hline
\end{tabular}
}
\caption{\small Empirical sizes for HICM, ICM, and S, at  nominal  level of 5\% or 10\%.}
\label{table:sizes}
\end{table}

In Figure \ref{Fig i}, we represent the power curves of the different tests
when testing the null hypothesis that $H_0:\beta=\bar\beta$ with
$\bar\beta\in[-1.5,1.5] \times \sqrt{n} /c$.  In the benchmark
polynomial model, S with 1 IV does not have more than trivial power.
The power of S with 3 instruments is comparable to
the one of HICM. With 7 instruments, it decreases and is comparable to
the power of ICM.
Increasing identification strength does not improve the power of S
with only one IV.
With a linear specification, S with one IV is the
most powerful procedure as expected. Interestingly, HICM and ICM have
power comparable to S with 3 IV, and dominate S with 7 IV.  In  a
model with group heterogeneity, S with 3 IV does not have any power, while S
with 7 IV is slightly dominated by HICM.  Increasing to 15 IV lowers
the power of S, which becomes comparable to that of ICM.  Overall, HICM
performs extremely well and is at the same time  less
undersized and more powerful than ICM.  By contrast, S
 lacks power when too few instruments are used, while its power
worsens if too many instruments are used. As it is never clear how
to select the (number of) instruments, our HICM inference procedure appears
particularly attractive.

\begin{figure}[!ht]
\begin{center}
\includegraphics[width=9cm,height=6cm]{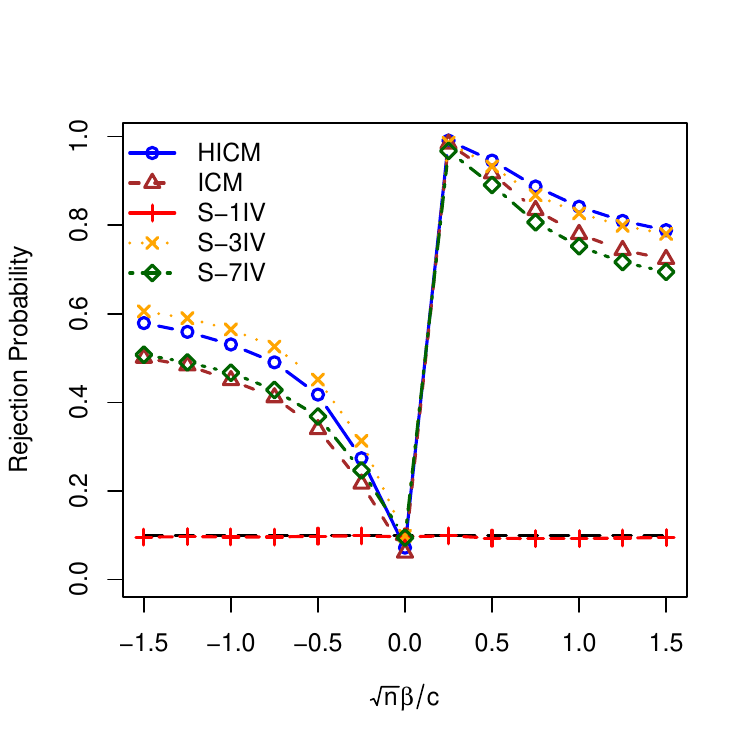}
\includegraphics[width=9cm,height=6cm]{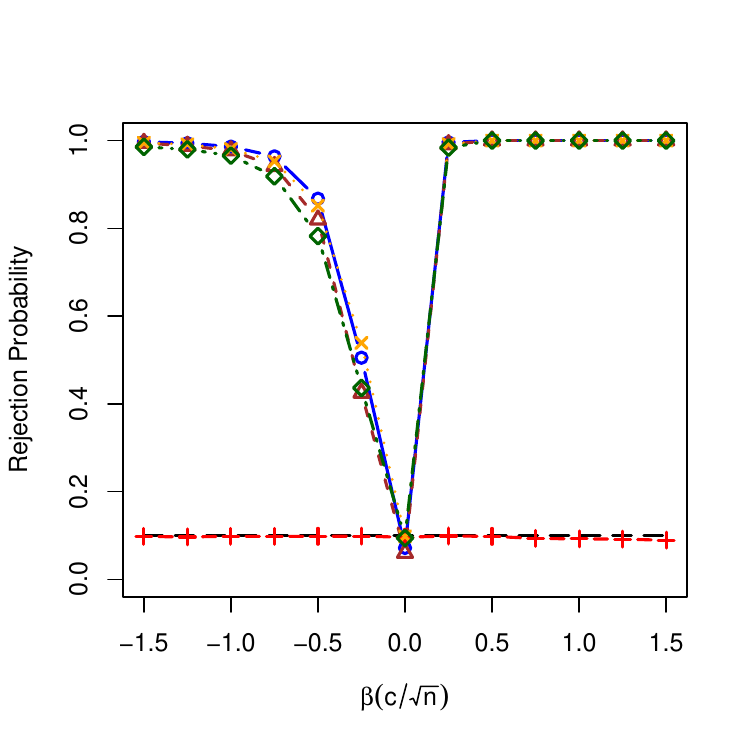}
\includegraphics[width=9cm,height=6cm]{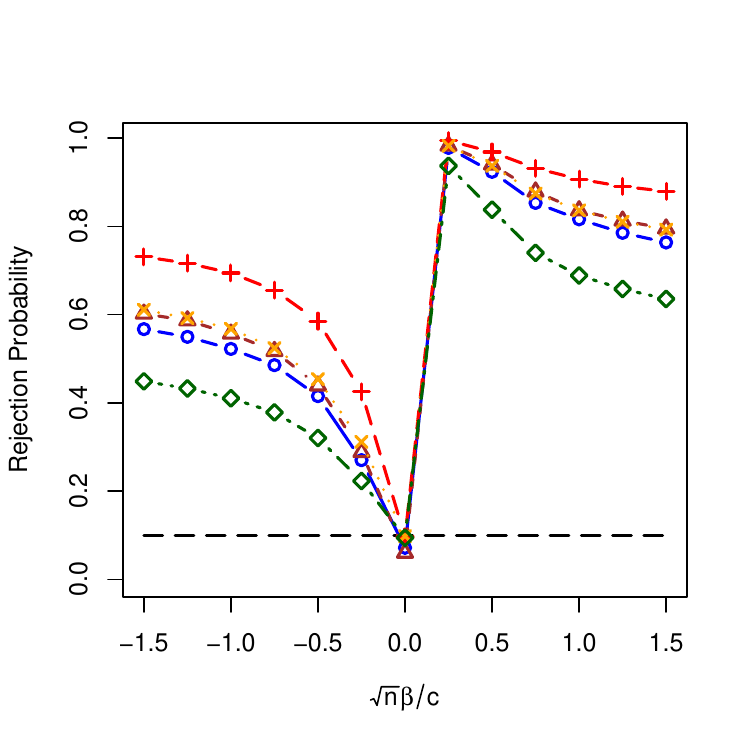}
\includegraphics[width=9cm,height=6cm]{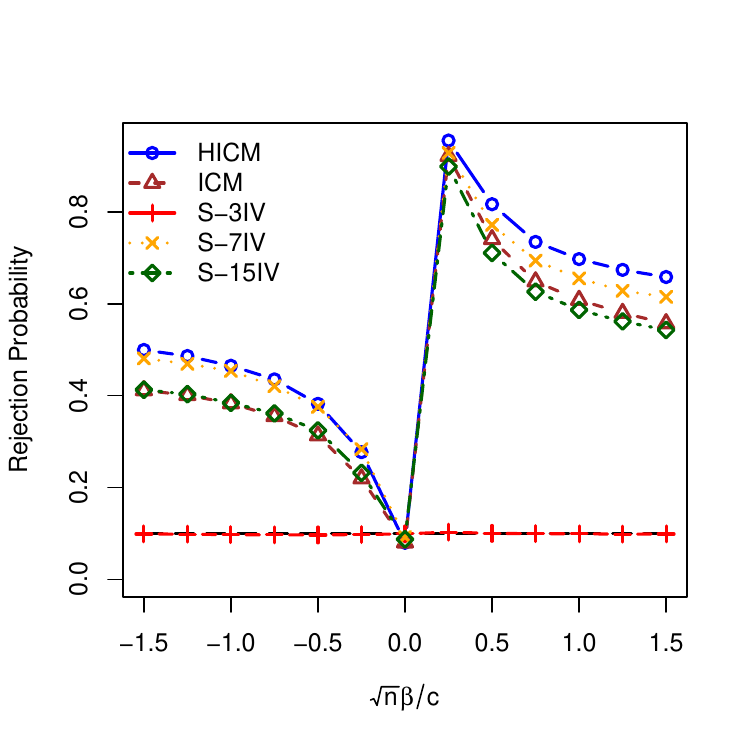}
\end{center}
\caption{\small Power curves for polynomial model with $c=3$ and $n=201$ (top left), with
stronger identification $c=7$ (top right),  linear model with $c=3$
and $n=201$ (bottom left),
polynomial group model with $c=3$ and $n=401$ (bottom right).}
\label{Fig i}
\end{figure}

\subsection{Specification testing} \label{subsec:sim specif test}

We investigate data-generating processes similar to the ones above,
but with $\delta$ varying away from zero.
There are not many competitors that can be compared to our specification test.
We consider two procedures relying
on unconditional moments obtained from a given set of instruments.
First, the J-test of overidentification based on the
continuously updated GMM with conservative critical values, hereafter
JCUE, as recently discussed in
\cite{AntoineFrazierRenault2026WP}\footnote{\cite{AntoineFrazierRenault2026WP}
also propose a conditional approach that exploits information about
the existence of strongly identified directions in the parameter space
to define data-dependent critical values that are less conservative
than the robust ones. We do not explicitly consider this approach here.}.
Regardless of identification strength, the CUE objective
function is asymptotically upper-bounded by a random variable
distributed as a chi-square with degrees of freedom equal to the
number of instruments, so one obtains simple conservative critical
values. We also compute the jackknife T-specification test proposed
by \cite{Chao2014}, hereafter JackT.
The  test is based on (i) estimating \(\beta_0\) by HFULL, a
heteroskedasticity-robust version of the Fuller estimator proposed by
\cite{hausman_instrumental_2012}, and (ii) considering a jackknife
version of the overidentification statistic, based on the objective
function of the JIVE2 estimator proposed by
\cite{AngristImbensKrueger1999}.  The associated  statistic is
asymptotically distributed as a chi-square with degrees of freedom
equal to the degree of overidentification, when the number of
instruments increases with the sample size, or when it is fixed under
maintained homoskedasticity. Hence, while the test is generally valid under the framework of
{\em many weak instruments}, it may not be in the  setup considered here.

\begin{table}[ht]
\centering
\scalebox{.8}{
\begin{tabular}[ht]{lrrrrrrr}
\hline
 & HICM$^*$ & JCUE-1IV & JCUE-3IV & JCUE-7IV &  JackT-3IV & JackT-7IV &
 \\
\hline
Polynomial & 1.26 & 0.04 & 1.48 & 1.44 &   8.40 & 4.84 & \\
 & 2.72 & 0.10 & 3.40 & 3.84 &  15.70 & 9.74 & \\
Strong.id. & 1.82 & 0.00 & 2.00 & 2.24 &   8.14 & 5.20 & \\
 & 3.40 & 0.04 & 4.22 & 5.06 & 14.94 & 9.94 & \\
Linear & 1.10 & 0.00 & 1.44 & 1.26 &  4.18 & 4.64 & \\
 & 2.22 & 0.00 & 3.62 & 3.76 &  7.84 & 8.28 & \\
\hline
 & HICM$^*$ & JCUE-3IV & JCUE-7IV & JCUE-15IV & JackT-3IV & JackT-7IV & JackT-15IV\\
\hline
Polynomial group & 0.86 & 0.02 & 1.56 & 1.16 & 1.72 & 4.84 & 4.42\\
 & 2.66 & 0.12 & 3.64 & 4.10 & 3.98 & 10.10 & 8.34 \\
 \hline
 \end{tabular}
}
\caption{\small Empirical sizes for  HICM$^*$,
 JCUE, and JackT, at  nominal  level of 5\% or 10\%.}
\label{table:sizesspec}
\end{table}

\begin{figure}[!ht]
\begin{center}
\includegraphics[width=9cm,height=6cm]{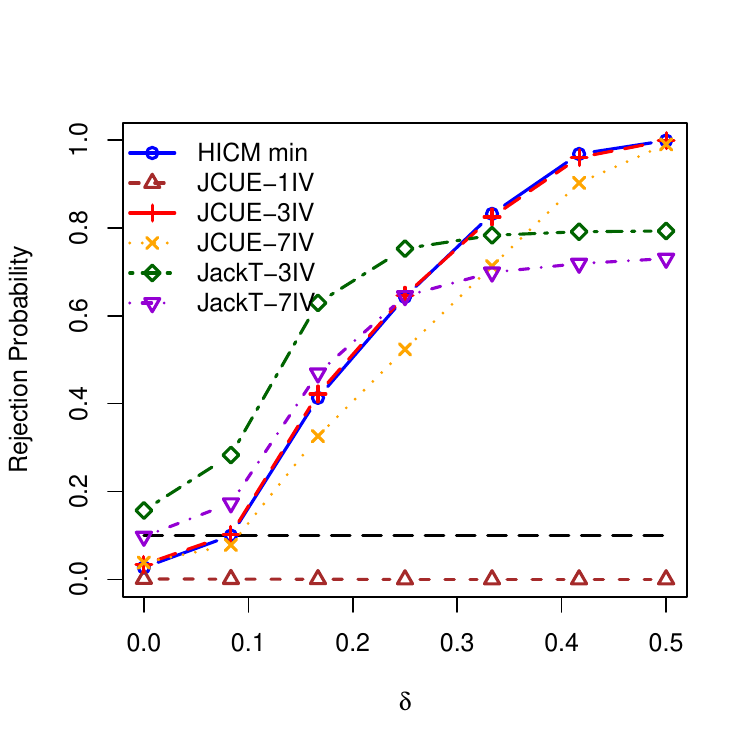}
\includegraphics[width=9cm,height=6cm]{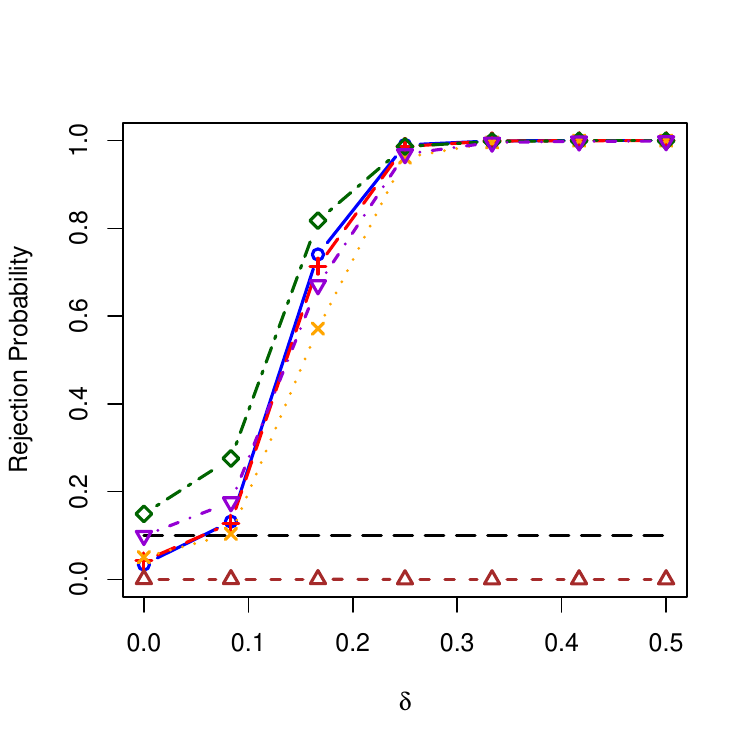}
\includegraphics[width=9cm,height=6cm]{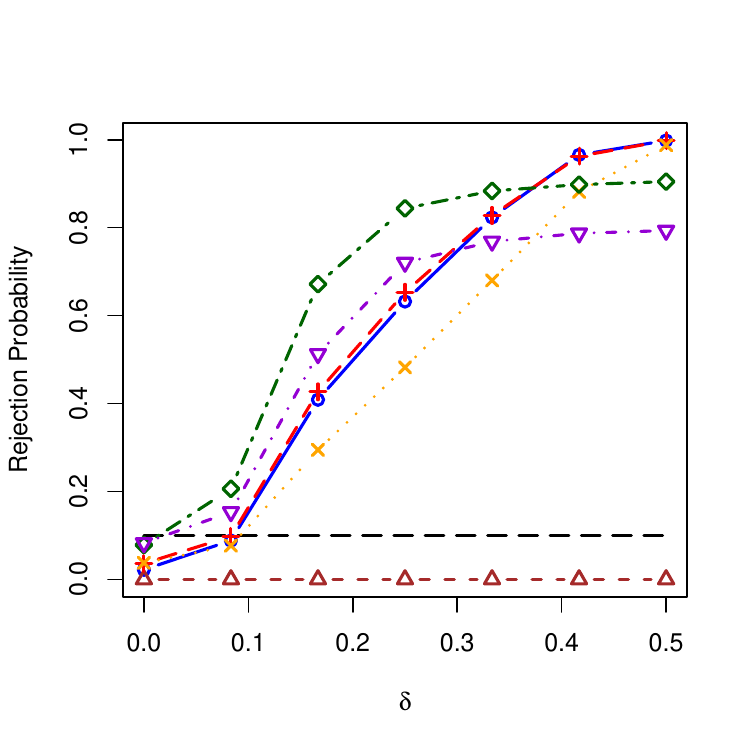}
\includegraphics[width=9cm,height=6cm]{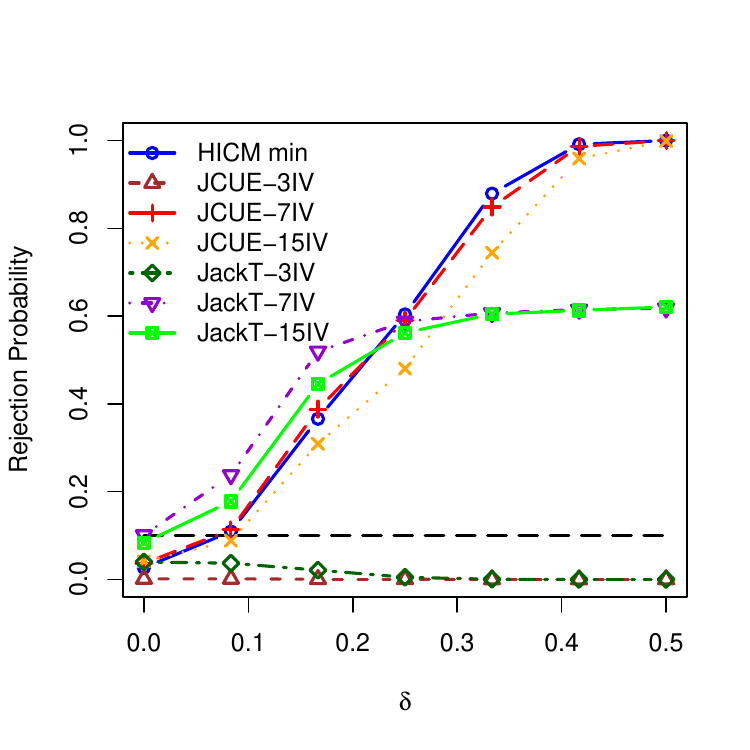}
\end{center}
\caption{\small Power curves for polynomial model with $c=3$ and $n=201$ (top left), with
stronger identification $c=7$ (top right),  linear model with $c=3$
and $n=201$ (bottom left),
polynomial group model with $c=3$ and $n=401$ (bottom right).}
\label{Fig ii}
\end{figure}

We consider different versions of JCUE and JackT, respectively with 1, 3, and 7 instruments
for one variable $Z$, and 3, 7, or 15 instruments for two
variables $Z_1$ and $Z_2$, see Section \ref{subsec:sim idr
inference} for details (note that JackT cannot be used with one
instrument only, as there is exact identification).
The empirical sizes are reported in Table
\ref{table:sizesspec}. Our specification test and the tests
based on JCUE are undersized as expected.  The test based on JackT
with 3 instruments is oversized for the polynomial model, but size
controls improve with 7 instruments, as well as for the linear and
polynomial group models.  In Figure \ref{Fig ii}, we present power
curves. For our benchmark polynomial model, the JCUE test has trivial
power with one instrument. Power improves with 3 instruments, then
decreases with 7 instruments. Our HICM$^*$ test has similar power to
the best of the JCUE tests.  The JackT tests are more powerful than
our test for small values of $\delta$, but their power curves flatten
for higher values. As for JCUE, increasing the number of instruments
negatively affects power. In the stronger identification setup, all
tests are powerful, but JCUE with only one instrument.  In the linear
model and the polynomial group model, we found the same main features
as in the benchmark model. However, JackT with 3 instruments has
trivial power for the polynomial group model.  Throughout the
different designs, the small sample properties of both JackT and JCUE
are affected by the number of instruments. By constrast, our
specification test consistently controls size and is powerful
regardless of identification strength and the particular functional
form that links instruments and endogenous variables.

\section{Empirical illustration}
\label{sec.ea}

We revisit \cite{antoine_identification-robust_2023}'s empirical
application which extends some of the results presented in
\cite{Mexico}. They study the impact of a large population collapse in
16th-century Mexico on land institutions by adopting an
instrumental-variables empirical strategy based on the characteristics
of a massive epidemic in the mid-1570s, which is believed to have been
caused by a rodent-transmitted pathogen that emerged after several
years of drought were followed by a period of above-average rainfall.
As explained by these authors, the sharp decline in population lowered
the costs and increased the benefits of acquiring land from indigenous
villages in many areas. As their three excluded instrumental
variables, they use proxies for climate conditions : (i) drought, the
sum of the 2 lowest consecutive PDSI values between 1570 and 1575
(more negative numbers indicate severe and prolonged drought), (ii)
rainfall, the maximum PDSI between 1576 and 1580 (as a measure of
excess rainfall), and (iii) gap, the difference between the minimum
PDSI between 1570 and 1575 and the maximum between 1576 and
1580.\footnote{The Palmer Drought Severity Index (PDSI) is a
normalized measure of soil moisture that captures deviations from
typical conditions at a given location.}

Using the data constructed by \cite{Mexico}, we estimate the
short-term effects of the population collapse in the model
\begin{equation} \label{eq:mexicomainlinear}
S1:\quad y_i = \beta_0 + \beta_1 Y_{2i} + \gamma' X_{1i} + u_i \, ,
\qquad E(u_i|X_{1i},X_{2i})=0 \, ,
\end{equation}
where $y_i$ is the inverse hyperbolic sine of the percent rural
population living in hacienda communities in 1900, $Y_{2i}$ is the
population decline in municipality measured as the log ratio of 1650
and 1570 density, $X_{2i}$ represents the vector of the climate
instruments, and $X_{1i}$ is a vector of control variables of
geographic and economic features, namely the standard deviation of
PDSI, a measure of maize productivity, various measures of elevation
and slope, as well as the log of tributary density in 1570 and
governorship-level fixed effects, see Column 6 in Table 2 in
\cite{Mexico}.

In Table \ref{table:Mexico}, we report confidence regions obtained
with three inference procedures that simultaneously test the null and
the validity of the model: (i) HICM introduced in this paper, (ii) ICM
introduced in \cite{antoine_identification-robust_2023}, and (iii) S introduced in
\cite{SW} assuming a linear reduced form. Following the analysis in
\cite{antoine_identification-robust_2023}, we use as instruments the
two most reliable variables gap and drought, as well as regional
dummies as additional controls.\footnote{When using the three climate instruments with
regional dummies, the model is rejected.}  All three procedures yield
a significant and negative effect of the population decline on the
percent rural population living in hacienda as expected: a decrease in
the ratio of 1650 to 1570 density increases the likelihood of having
more large estates per area in 1900. The confidence set obtained with
HICM is the narrowest and  is fully enclosed in the ones obtained
with S or ICM. Computationally, the advantage of HICM over ICM is clear. Since its
critical value does not depend on the parameter values, it only needs
to be computed once for the entire grid of candidates. Accordingly, and
similarly to S, the associated confidence set can be obtained much
faster than by ICM.  With two instruments and  a grid of 2,500
candidate points,  it takes under 50 seconds for HICM, but over 52 minutes for ICM! 

\begin{table}
\begin{center}
\scalebox{.9}{
\begin{tabular}{rcr}
\hline
&  \textbf{2 climate IV} &  \textbf{Computational time in sec.} \\
\hline
HICM &  [-1.235, -0.797] & 46.66 \\
ICM  &  [-2.159, -0.416] & 3,140.22\\
S &     [-1.369, -0.682] & 46.66 \\ \hline
\end{tabular}
}
\caption{\small 95\% confidence intervals for the population collapse
 obtained using a grid of 2,500 evenly spread points over $[-2.5,0]$ and 499 bootstraps to generate critical values.}
\label{table:Mexico}
\end{center}
\end{table}

While \cite{Mexico} and \cite{antoine_identification-robust_2023}
assume throughout that the effect of the population collapse $Y_{2}$
on land tenure $y$ is linear, we relax this assumption by considering
 instead  nonlinear specifications with either 2 or 3 endogeneous
 variables, specifically
\begin{eqnarray*}
S2: \quad  y_i &= & b_0 + b_1 Y_{2i} \times I_{Y_{2i}<-1.89} + b_2 Y_{2i} \times I_{Y_{2i}>-1.89}
+ \gamma_b' X_{1i} + u_{b,i}  \label{eq:mexico nonlinear 2endo}  \\
S3: \quad  y_i &= & c_0 + c_1 Y_{2i} \times I_{Y_{2i}<-1.89} + c_2
Y_{2i} \times I_{-1.89<Y_{2i}<-0.89} + c_3 Y_{2i} \times I_{Y_{2i}>-0.89}\\
&& \qquad
+ \gamma_c' X_{1i} + u_{c,i}  \label{eq:mexico nonlinear 3endo}
\end{eqnarray*}
Both specifications explore whether more negative values of the
population collapse - that is, more severe population declines - are
driving the previous results.\footnote{We thank I. Andrews for this
suggestion.}  Specification S2 introduces an indicator which targets observations below/above the 20\%
quantile of $Y_2$ (equal to -1.89), whereas Specification S3 includes
an additional indicator to target observations above the 80\% quantile
of $Y_2$ (equal to -0.89).

Multivariate confidence regions are obtained using a two-dimensional
grid of evenly spread points over $[-1.5,-0.5]\times[-8,1]$ (with
respective steps of 0.01 and 0.05) for Specification S2 and a
three-dimensional grid of evenly spread points over $[-2.7,
0.2]\times[-7.5,0]\times[-18, 4.5]$ (with
respective steps 0.05, 0.05 and 0.1) for Specification S3. Other details are similar to
above. We only consider the procedures based on HICM and S: based on
our previous results, it would take weeks to obtain a multivariate
confidence region with ICM for Specification S3. For both
specifications, HICM provides finite and bounded
confidence regions, while S provides large and possibly unbounded ones.
In Figure \ref{graph: mexico NL specif}, we  report the 95\% confidence
regions for the parameters of  Specification S2. The HICM confidence region is
small and bounded, while that obtained with S is much larger and
possibly unbounded.\footnote{As a robustness check,  we consider much larger grids of
candidate points.  Figure \ref{graph: mexico NL S extended} in our
Appendix \ref{sec:appfig} strongly suggests that the S confidence region is indeed
unbounded.} In Table \ref{table:Mexico NL}, we report the 95\%
confidence intervals obtained with HICM by projecting the multivariate
confidence regions.  With Specification S2, the effect of the
population collapse below the quantile at 20\% is negative,
significant, and in line with our results for Specification S1. The
effect above the 20\% quantile is also negative and significant,
potentially larger in magnitude, but also less precisely estimated.
Given the width of confidence intervals, the previously estimated
linear Specification S1 cannot be ruled out.
With  Specification S3, the confidence intervals  are in line with
previous specifications, but only $c_2$ is significantly different from 0 at
95\% confidence level.


\begin{figure}[!ht]
\begin{center}
\includegraphics[width = 6cm, height = 4cm]{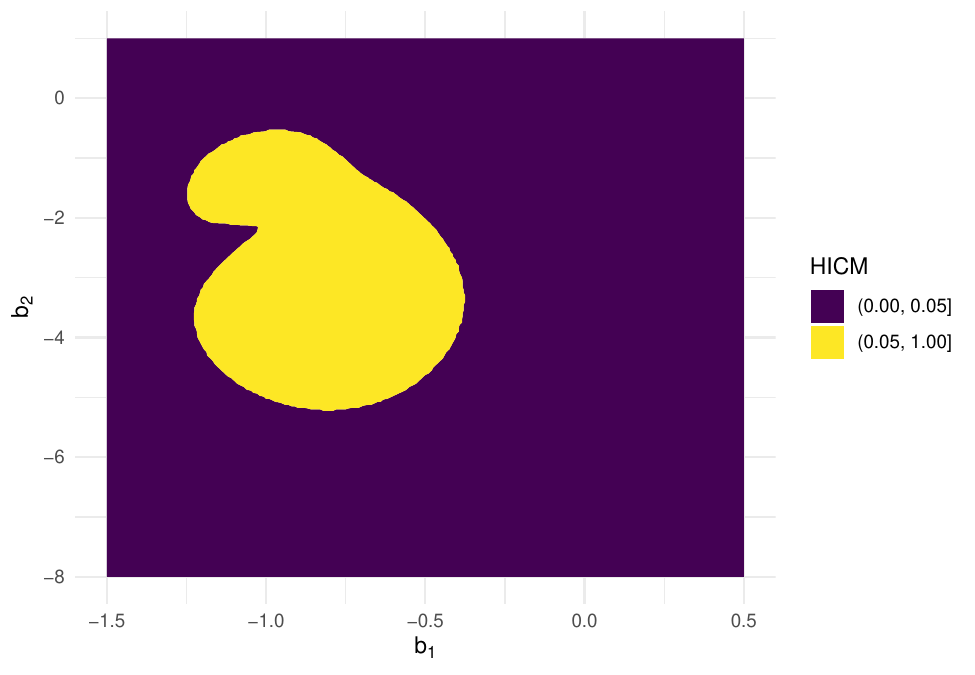}
\includegraphics[width = 6cm, height = 4cm]{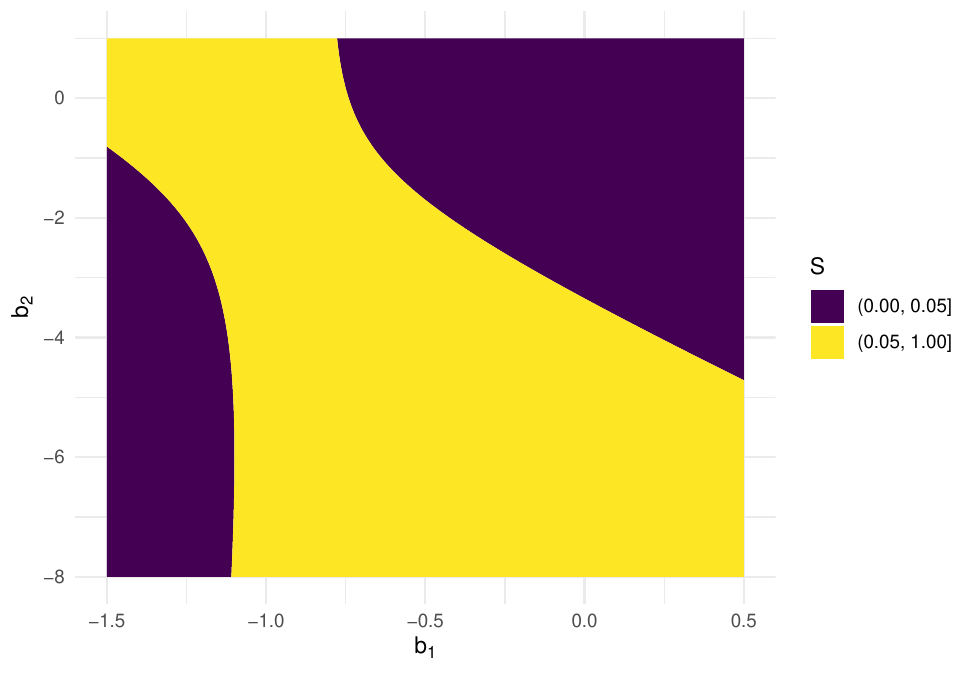}
\end{center}
\caption{\small 95\% confidence regions for the (nonlinear) effect of
the population collapse obtained with HICM (left) and S (right).}
\label{graph: mexico NL specif}
\end{figure}

\begin{table}[ht]
\begin{center}
\begin{tabular}{c|r|c|r|c|r|c|}
\hline
Specification & \multicolumn{2}{c|}{Linear } & \multicolumn{2}{c|}{S2 } & \multicolumn{2}{c|}{S3} \\ \hline
&$\beta_1$ & [-1.23, -0.80]
&$b_1$ &  [-1.24, -0.38] & $c_1$ & [-2.50, 0.15] \\
&&&$b_2$ &  [-5.20, -0.55] & $c_2$ & [-7.20, -0.05] \\
&&& & & $c_3$ & [-17.40, 4.40] \\
\hline
\end{tabular}
\caption{\small 95\% confidence intervals for the (nonlinear) effect of the
population collapse with HICM.}
\label{table:Mexico NL}
\end{center}
\end{table}

To put these results into perspective, we revisit part of the
counterfactual analysis in \cite{Mexico}. Specifically, to obtain what
landholdings would have been in the absence of a population collapse,
we subtract off the predicted marginal effect of the population change
in each municipality from the actual 1900 outcome.
We find that the distribution of hacienda population changes
substantially under our counterfactual. Our results are reported in
Table \ref{table:Mexico counterfactual} for the median and the
third quartile of the percentage of 1900 population living in haciendas,
respectively 16.7\% and 44.4\%. When we remove the effect of the
population collapse given by our estimation strategy, we find that the
percentage of 1900 population living in haciendas drops by 1.5\% to
4.6\% with  Specification S1,  and by 1.1\% to 9.1\% with
Specification S2. The drop is  potentially larger at the third
quartile with a lower bound of 2.7\% for the linear specification and
2.4\% for Specification S2. Overall, such an impact is economically
meaningful and practically relevant.

\begin{table}[ht]
\begin{center}
\begin{tabular}{lcc}
\hline
& \multicolumn{2}{c}{Counterfactual} \\ \hline
\% of 1900 population living in haciendas & Linear & S2  \\ \hline
Median       & [1.5, 4.6]  & [1.1,9.2]\\ \hline
3-rd quartile & [2.7, 13.6] & [2.4,24.7] \\ \hline
\end{tabular}
\caption{\small Counterfactual analysis of the causal impact of the
demographic collapse: we report the predicted marginal effect of the
population change in each municipality  at the median and at the third
quartile of the percentage of 1900 population living in haciendas.}
\label{table:Mexico counterfactual}
\end{center}
\end{table}

\section{Concluding remarks} \label{sec.concl}

We have developed and studied a new set of inference procedures in a
linear IV model based on HICM, an heteroskedastic version of the ICM
statistic of \cite{antoine_identification-robust_2023}.  The tests are
easy to implement, robust to any identification pattern and unknown
heteroskedasticity, and nonparametric with respect to the first-stage
equation. The key innovative feature is the independence of critical
values from parameter values.  As a result, inference is greatly
facilitated. In addition, this also allows for straightforward
subvector inference and pure specification testing.  In simulations,
our procedure is competitive with the ICM test of
\cite{antoine_identification-robust_2023}.  While subvector inference
is conservative, we have shown that in an application that it can
yield bounded and informative confidence regions for parameters of
interest.

\medskip
\bibliography{biblio}

\newpage
\appendix

\phantomsection\label{supplementary-material}

\begin{center}
{\large\bf SUPPLEMENTARY MATERIAL}
\end{center}

\section{Proofs}
\label{sec:proofs}

\subsection{Proof of Theorem \ref{th:level}}

The proof is done in three main steps. First, we show uniform convergence
of processes of the type $n^{-1/2} \sum_{i=1}^{n}{\left(b'
\widehat{\Omega}(Z_i) b \right) \left( Y_{i} - \E(Y_{i}| Z_{i}
)\right) \exp(i s'Z_{i})}$. We then introduce some notations and
preliminary results.  The main proof  shows uniform size control by
contradiction.

\subsubsection{Uniform convergence of processes}
\label{sec:ucp}

From Assumption \ref{ass:iid}, $Z$ is bounded, and from Assumption \ref{ass:w}, $\mu$
has a compact support $M$.
The class of functions $\left\{ s'Z, s \in M \right\}$ has
Vapnik-\v{C}ervonenkis dimension $k+2$ and thus has bounded uniform
entropy integral (BUEI) with constant envelope.
Since the functions $t \rightarrow \cos(t)$ and $t \rightarrow
\sin(t)$ are bounded Lipschitz with
derivatives bounded by 1, the class $\left\{ \cos(s'Z), \sin(s'Z), s
\in M \right\}$ is BUEI with constant envelope, see \citet[Lemma 9.13]{kosorok}.

By Assumption \ref{ass:pi}, the class ${\cal E}$ is BUEI with envelope
$F$. From \citet[Theorem 9.15]{kosorok}, the class $\left\{ \Pi (Z) \cos(s'Z),
\Pi (Z) \sin(s'Z),  \Pi (\cdot)  \in {\cal E},  s \in M \right\}$ is
BUEI with envelope $c F$ for some constant $c$, and from \citet[Lemma 2.8.3]{vaart_weak_2000}
\begin{align*}
\left(
\begin{array}{c}
  n^{-1/2} \sum_{i=1}^{n}{  \left[ \E(Y_{i}| Z_{i} )  \cos(s'Z_{i})
  - \E \left( Y \cos(s'Z) \right) \right] }
 \\
  n^{-1/2} \sum_{i=1}^{n}{ \left[ \E(Y_{i}| Z_{i} ) \sin(s'Z_{i})
    - \E \left( Y \sin(s'Z) \right) \right]}
\end{array}
\right)
 & \leadsto
 \left(
\begin{array}{c}
 \mathbb{G}_{1} (s)
 \\
 \mathbb{G}_{2} (s)
 \end{array}
\right)
\, ,
\end{align*}
uniformly in $P \in {\cal P}$ where $\left( \mathbb{G}'_{1}(\cdot),
\mathbb{G}'_{2}(\cdot)\right)$  is a vector Gaussian process with mean $\mO$.
Formally weak convergence uniform in $P$ means that
\begin{align*}
\sup_{P \in {\cal P}} d_{BL}(\mathbb{G}_{n},\mathbb{G}) & \rightarrow 0
\quad \mbox{where } \ d_{BL}(\mathbb{G}_{n},\mathbb{G})  = \sup_{f \in BL_{1}}
\left| \E f\left(\mathbb{G}_{n}\right) - \E f \left( \mathbb{G}
\right) \right|
\end{align*}
is the bounded Lipschitz metric, that is $BL_{1}$ is the set of real
functions bounded by 1 and whose Lipschitz constant is bounded by 1.
This implies that
\begin{equation*}
  n^{-1/2} \sum_{i=1}^{n}{  \left[ \E(Y_{i}| Z_{i} )  \exp(i s'Z_{i})
  - \E\left(Y \exp(is'Z)\right) \right]}
\leadsto  \mathbb{G}_{1} (s) +  i \ \mathbb{G}_{2} (s)
\label{lem:eq1}
\end{equation*}
As $\E \|Y\|^{2+\delta} < \infty$,
\begin{equation}
  n^{-1/2} \sum_{i=1}^{n}{ \left( Y_{i} - \E(Y_{i}| Z_{i} )\right)
 \exp(i s'Z_{i})}
\leadsto  \mathbb{G}_{3} (s) +  i \  \mathbb{G}_{4} (s)
\, ,
\label{lem:eq2}
\end{equation}
as the above class of functions is BUEI for a squared-integrable envelope.

Since $\Omega(\cdot)$ is a variance matrix with uniformly bounded
elements, the functions $b'\Omega(\cdot) b$ for $\|b\| \leq M$,
and $\Omega \in {\cal O}$ satisfies
\begin{align*}
\left| b' \Omega_{1}(\cdot) b  - b' \Omega_{2}(\cdot) b \right| &
\leq M^{2} \|\Omega_{1} - \Omega_{2}\|
\, .
\end{align*}
From Assumption \ref{ass:variance} and \citet[Lemma 9.13]{kosorok},
these functions forms a BUEI class with constant envelope. Consider ${\cal B} =
\left\{ \left( b'\Omega(\cdot) b\right)^{-1/2}, \|b\| \leq M, \Omega \in {\cal O}
\right\}$. Since the function $\phi (f) = f^{-1/2} $ is Lipschitz for
$f$ uniformly bounded away from zero, ${\cal B}$ is a BUEI class with
constant envelope.
Gathering  results, for $B \in {\cal B}$,
\begin{equation*}
\mathbb{G}_{n} (B,s) =
n^{-1/2} \sum_{i=1}^{n}{B (Z_{i}) \left( Y_{i} - \E(Y_{i}| Z_{i} )\right)
\exp(i s'Z_{i})}
\leadsto \mathbb{G} (B,s)
\, ,
\label{lem:eq3}
\end{equation*}
uniformly in $P \in {\cal P}_{\beta_0}$ and $\beta_0 \in {\cal B}$,
where $\mathbb{G} (B,s)$ is a centered Gaussian vector process.

Replacing
$B(\cdot) = \left( b' \Omega(\cdot) b \right)^{-1/2}$ by $\widehat{B}(\cdot) =
\left( b' \widehat{\Omega}(\cdot) b\right)^{-1/2}$, does
not change the uniform weak limit of the process.  Indeed, from Assumption
\ref{ass:variance}-(iii) and (iv), it is sufficient to show that
\begin{equation}
\sup_{P \in {\cal P}}
\Pr \left[  \sup_{m\geq n} \sup_{s} \| \mathbb{G}_{m} (\widehat{B}_{m},s) -
\mathbb{G}_{m} (B,s) \|_{\cal B}  > \varepsilon \right]
\rightarrow 0
\qquad \forall \varepsilon >0
\, .
\label{eq:asyequicont}
\end{equation}
This follows as $\mathbb{G}_{n}(B,s) $ is  asymptotically
equicontinuous uniformly in $P$, see \citet[Theorem
2.8.2]{vaart_weak_2000}.
\medskip

\subsubsection{Notations and preliminary results}

For  vector complex-valued functions $h_{1}(s)$ and $h_{2} (s)$,
define the scalar product
\[
\ip{h_{1}}{h_{2}}  = \frac{1}{2}
\left(
  \int_{}^{}{  \left( \overline{h}'_{1}(s) {h}_{2}(s) + h_{1}'(s)
  \overline{h}_{2} (s) \right)
  \, d\mu (s)}
\right)
\]
and the norm
$ \| h_{1} \|  = \ip{h_{1}}{h_{1}}^{1/2}$.
Denote
\[
\widehat{h}_{\beta_{0}} (s) \equiv n^{-1/2} \sum_{i=1}^{n}{
\widehat{S}_{i} \exp(i s'Z_{i})}
\, ,
\quad \widehat{S}_{i} = \left( b_0'\widehat{\Omega}(Z_i) b_0 \right)^{-1/2}
Y_i'b_0
\, .
\]
From  (\ref{lem:eq2}), $\widehat{h}_{\beta_0}(s) \leadsto  \mathbb{G}_{S}(s)$
uniformly in $P \in {\cal P}_{\beta_0}$ and in $\beta_{0} \in {\cal B}$, where
$\mathbb{G}_{S}(s)$ is a centered complex Gaussian process with
covariance $
E \mathbb{G}_{S}(s) \mathbb{G}_{S}(t) = E  \exp(i s'Z)  \exp(i t'Z)$.
Our test statistic writes $\HICM (\beta_{0}) = \HICM (\widehat{h}_{\beta_{0}}) =
\|\widehat{h}_{\beta_0}\|^{2} = \widehat{S}'W\widehat{S}$, where
$\widehat{S} = \left( \widehat{S}_{1}, \ldots, \widehat{S}_{n} \right)'$.

\begin{lem}
Over the set $\left\{ h : \|h\| \leq C\right\}$,
 $\HICM (h)$ is bounded and Lipschitz continuous in $h$.
\label{lem:lip}
\end{lem}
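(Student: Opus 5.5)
The map in question is $\HICM(h) = \|h\|^2 = \ip{h}{h}$, so the plan is a direct computation with the inner product just defined; no probabilistic argument is needed.

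\emph{Step 1 (inner product structure).} First check that $\ip{\cdot}{\cdot}$ is a genuine real symmetric bilinear form that is positive semi-definite. It is symmetric and real-valued because $h_1'\overline{h}_2 = \overline{\overline{h}_1' h_2}$, so $\ip{h_1}{h_2} = \int \operatorname{Re}(\overline{h}_1' h_2)\, d\mu$. It is bilinear over the reals. It is nonnegative because $\ip{h}{h} = \int |h(s)|^2 d\mu(s) \geq 0$, using that $\mu$ is a probability measure (Assumption \ref{ass:w}). The Cauchy--Schwarz inequality $|\ip{h_1}{h_2}| \leq \|h_1\|\,\|h_2\|$ then holds, either by the standard argument for semi-inner products or directly from the integral representation with Cauchy--Schwarz in $L^2(\mu)$.

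\emph{Step 2 (boundedness).} On $\{h : \|h\| \leq C\}$ we have $\HICM(h) = \|h\|^2 \leq C^2$ immediately.

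\emph{Step 3 (Lipschitz continuity).} For $h_1, h_2$ in the ball, bilinearity and symmetry give
\[
\HICM(h_1) - \HICM(h_2) = \ip{h_1 - h_2}{h_1 + h_2} .
\]
Applying Cauchy--Schwarz yields
\[
|\HICM(h_1) - \HICM(h_2)| \leq \|h_1 - h_2\|\,\bigl(\|h_1\| + \|h_2\|\bigr) \leq 2C\,\|h_1 - h_2\| .
\]
Lipschitz continuity therefore holds with constant $2C$ with respect to the norm $\|\cdot\|$. Because $\mu$ is a probability measure, $\|h\| \leq \sup_s |h(s)|$, so the same constant works for the uniform norm over the compact support $M$ used for the process convergence in Section \ref{sec:ucp}. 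This compatibility is what later allows the continuous mapping and bounded Lipschitz metric arguments to be applied to $\widehat{h}_{\beta_0}$.

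\emph{Main obstacle.} The only delicate point is making sure the norm in the Lipschitz statement is the one under which the weak convergence was established. It must also be checked that the symmetrized definition of $\ip{\cdot}{\cdot}$ really equals $\int \operatorname{Re}(\overline{h}_1' h_2)\,d\mu$ and hence satisfies Cauchy--Schwarz. Both follow from the identity in Step 1 together with the domination $\|h\| \leq \|h\|_\infty$, so I would verify that identity explicitly and then conclude.
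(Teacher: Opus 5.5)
Your proof is correct and is essentially the paper's argument: boundedness by $C^2$, then $\HICM(h_1)-\HICM(h_2)=\ip{h_1-h_2}{h_1+h_2}$ combined with Cauchy--Schwarz, giving Lipschitz constant $2C$ with respect to $\|\cdot\|$. Your explicit check that $\ip{\cdot}{\cdot}=\int \operatorname{Re}(\overline{h}_1'h_2)\,d\mu$ is a real symmetric positive semi-definite form fills in a step the paper leaves implicit (it only needs $\mu$ to be a nonnegative measure), and your sup-norm comparison is a harmless extra that the lemma does not require.
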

\begin{proof}
Boundedness is trivial. For Lipschitz continuity,
\begin{multline*}
\left| \HICM (h_{1}) - \HICM (h_{2}) \right|
= \left| \|h_{1}\|^{2} - \|h_{2}\|^{2} \right|
 = \left| \ip{h_{1}-h_{2}}{h_{1} + h_{2}} \right|
\\
 \leq \|h_{1} - h_{2} \| \|h_{1} + h_{2}\|
 \leq \|h_{1} - h_{2} \| ( \|h_{1}\| + \|h_{2}\| ) \leq 2\, {C} \,
\|h_{1} - h_{2} \|
\, .
\tag*{\qedhere}
\end{multline*}
\end{proof}

\begin{lem} Under Assumption \ref{ass:iid} and \ref{ass:w},
\[
\lim_{M \rightarrow \infty}\sup_{\beta_{0}} \sup_{P \in {\cal P}_{\beta_{0}}}
\Pr \left[ \HICM (\beta_{0}) > M \right] \rightarrow 0
\, .
\]
\label{lem:boundicm}
\end{lem}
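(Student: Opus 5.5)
Under $H_0$ (that is, $P\in{\cal P}_{\beta_0}$), the plan is a Markov bound on $\HICM(\beta_0)=\|\widehat h_{\beta_0}\|^2$, uniform in $\beta_0$ and $P$, after handling the estimated variance separately.

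\textbf{Split off the estimated variance.} Write $\widehat h_{\beta_0}=h^{0}_{\beta_0}+(\widehat h_{\beta_0}-h^{0}_{\beta_0})$, where
\[
h^{0}_{\beta_0}(s)=n^{-1/2}\sum_{i=1}^n S_i\exp(is'Z_i),\qquad S_i=(b_0'\Omega(Z_i)b_0)^{-1/2}Y_i'b_0 .
\]
Then $\HICM(\beta_0)\le 2\|h^0_{\beta_0}\|^2+2\|\widehat h_{\beta_0}-h^0_{\beta_0}\|^2$. Since $\mu$ is a probability measure, $\|h\|^2\le\sup_{s\in M}|h(s)|^2$. By (\ref{eq:asyequicont}), $\sup_s|\widehat h_{\beta_0}(s)-h^0_{\beta_0}(s)|$ converges to zero in probability, uniformly in $P$ and $\beta_0$. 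The proof of (\ref{eq:asyequicont}) uses Assumption \ref{ass:variance}. I take the lemma's citation of only Assumptions \ref{ass:iid} and \ref{ass:w} to mean the standing assumptions of Theorem \ref{th:level} are in force. So the second term is $o_p(1)$ uniformly and contributes at most $\varepsilon$ to the probability for $n$ large. If one prefers to avoid Assumption \ref{ass:variance}, the fallback is to bound $\widehat S_i^2$ directly, which would need $\widehat\Omega$ bounded below.

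\textbf{Second moment of the main term.} Under $H_0$ we have $\E(Y_i'b_0|Z_i)=0$, so the $S_i$ are independent, conditionally centered given $Z_i$, and $\E(S_i^2|Z_i)=1$. This uses $\Var(Y'b_0|Z)=b_0'\Omega(Z)b_0$. The point is that the conditional mean of $Y'b_0$ is zero whatever the first-stage $\Pi$, so nothing depends on identification strength. By independence the cross terms vanish, and
\[
\E\|h^0_{\beta_0}\|^2=\E\int \Bigl|n^{-1/2}\sum_i S_i e^{is'Z_i}\Bigr|^2 d\mu(s)=\int n^{-1}\sum_i\E S_i^2\,d\mu(s)=1 .
\]
Equivalently, $\E[S'WS]=n^{-1}\sum_i w(0)\E S_i^2=w(0)$, a fixed constant bounded by Assumption \ref{ass:w}; the exact constant depends on how $\mu$ is normalized, but it is uniformly bounded either way. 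This constant does not depend on $\beta_0\in{\cal B}$ or on $P$. Markov's inequality then gives $\Pr[\|h^0_{\beta_0}\|^2>M/4]\le 4w(0)/M$ uniformly.

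\textbf{Combine.} Putting the two steps together,
\[
\limsup_n\sup_{\beta_0}\sup_{P}\Pr[\HICM(\beta_0)>M]\le 8w(0)/M+\varepsilon .
\]
Letting $\varepsilon\to0$ and then $M\to\infty$ gives the claim.

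The main obstacle is the uniformity of the estimated-variance remainder over $\beta_0$ and $P$. I rely on the uniform asymptotic equicontinuity already established in Section \ref{sec:ucp}. The class ${\cal B}$ of functions $(b'\Omega b)^{-1/2}$ there is indexed by both $b$ and $\Omega$, which is exactly what delivers uniformity in $\beta_0$.
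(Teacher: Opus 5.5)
Your proof is correct and follows the paper's route. You bound the infeasible statistic $S'WS$ (with the true $\Omega$) by Markov's inequality, using conditional centering and $\E(S_i^2|Z_i)=1$ under $H_0$. You then dispose of the estimated-variance remainder through the uniform equicontinuity of Section~\ref{sec:ucp}; as you rightly note, the paper also tacitly relies on Assumption~\ref{ass:variance} there.

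The only difference is minor. You apply Markov once to the nonnegative quadratic form, with $\E[S'WS]=w(0)$. The paper instead splits it into diagonal and off-diagonal parts and uses a second-moment bound on the off-diagonal part.
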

\begin{proof}
Let $S_i = \left( b_0' {\Omega}(Z_i) b_0 \right)^{-1/2} Y_i'b_0$ and
$h_{\beta_0}$ the corresponding empirical process. Then
\begin{align*}
\HICM (h_{\beta_{0}})
&=  S'WS = n^{-1} \sum_{i=1}^{n}{S_{i}^{2} w(0)}
+ n^{-1} \sum_{i=1}^{n}{ \sum_{j\neq i}^{}{}S_{i}S_{j} w(Z_{i}-Z_{j})}
\, .
\end{align*}
Hence, for some constants $C, C' , C''>0$ independent of $P \in {\cal
P}_{\beta_{0}}$ and of $\beta_{0} \in {\cal B}$,
\begin{align*}
\Pr \left[ n^{-1} \sum_{i=1}^{n}{S_{i}^{2} w(0)} > M/2 \right] & \leq
2  w(0) \frac{\E S_{1}^{2}}{M} \leq \frac{C}{M}
\\
\Pr \left[ n^{-1} \sum_{i=1}^{n}{ \sum_{j\neq i}^{}{}S_{i}S_{j}
w(Z_{i}-Z_{j})}  > M/2 \right] & \leq
4 C' \frac{\E^{2} (S_{1}^{2}) }{M^{2}} \leq \frac{C''}{M}
\, ,
\end{align*}
using the boundedness of  $w(\cdot)$ and  Markov's inequality.
Now
\[
\HICM (\widehat{h}_{\beta_0}) -  \HICM (h_{\beta_{0}})
= \| \widehat{h}_{\beta_0} - h_{\beta_{0}}\|^2  + 2
\ip{\widehat{h}_{\beta_0} - h_{\beta_{0}}}{h_{\beta_{0}}}
\, ,
\]
where
\[
\widehat{h}_{\beta_0} - h_{\beta_{0}}  =
n^{-1/2} \sum_{i=1}^{n}{
\left[
\left(b_0' \widehat{\Omega}(Z_i) b_0\right)^{-1/2}
- \left(b_0' {\Omega}(Z_i) b_0\right)^{-1/2}
\right]
Y_{i}'b_0  \exp(i s'Z_{i})}
\, .
\]
Applying a analogous reasoning as in Section \ref{sec:ucp} shows that
the above quantity
is an $o_p(1)$ uniformly in $P \in {\cal P}_{\beta_{0}}$ and in $\beta_{0}$.
Hence
$
\HICM (\widehat{h}_{\beta_0}) -  \HICM (h_{\beta_{0}, {S}}) = o_p(1)
$
uniformly in $P \in {\cal P}_{\beta_{0}}$ and in $\beta_{0}$.
\end{proof}

\subsubsection{Main proof}

Let $G  \sim N (0, \mI)$. From our results in Section \ref{sec:ucp},
$
h_{{G}} (s) = n^{-1/2} \sum_{i=1}^{n}{ {G}_{i}  \exp(i
s'Z_{i})}  \leadsto \mathbb{G}_{S}(s)
$
uniformly in $P \in {\cal P}$. We say that $\widehat{h}_{\beta_0}$ {\em
uniformly weakly converges} to $h_{{G}}$  in $P \in
{\cal P}$, i.e.
\[
\sup_{\beta_{0} \in {\cal B}}
\sup_{P \in {\cal P}_{\beta_0}} d_{BL}(\widehat{h}_{\beta_{0}}, h_{{G}})
\rightarrow 0
\, ,
\]
see \cite{kasy_uniformity_2019} for a similar  terminology.

Let  $F (x) = \ind \left[ x < C_{1}\right] +
\frac{C_2-x}{C_{2}-C_{1}} \ind\left[ C_{1} \leq x \leq C_{2}\right]$
for some $0<C_{1} < C_{2}$ and consider the continuous truncation of
$\HICM(h)$ defined by
$
\HICM_{F} ( h) = \HICM (h) F(\|h\|)
$.
Consider the quantile of $\HICM_{F} (h)$
\[
c_{F,1-\alpha}(h) = \inf \left\{ c : \Pr\left[
\HICM_{F} (h) \leq c \right] \geq 1-\alpha \right\}
\, .
\]
Lemma \ref{lem:lip} ensures that $\HICM_{F}(h)$ is Lipschitz
continuous, so
\begin{align*}
1 - \alpha & \leq \Pr \left[ \ICM_{F} (h_1) \leq c_{F,1-\alpha}(h_1)
\right]
\leq Pr \left[ \ICM_{F} (h_2) \leq c_{F,1-\alpha}(h_1) + K\|h_{1}-h_{2}\|
\right]
\, ,
\end{align*}
for some constant $K>0$,
so that $c_{F,1-\alpha}(h_2) \leq c_{F,1-\alpha}(h_1) +
K\|h_{1}-h_{2}\|$.
Inverting the role of $h_{1}$ and $h_{2}$,
$c_{F,1-\alpha}(h_1) \leq c_{F,1-\alpha}(h_2) + K\|h_{1}-h_{2}\|$, so
 $c_{F,1-\alpha}(h)$ is Lipschitz in $h$.

Assume now that the conclusion of Theorem \ref{th:level} does not
hold. Then there exists some $\delta>0$, an infinitely increasing
subsequence of sample sizes $n_{j}$ and a sequence of probability
measures $P_{n_{j}} \in {\cal P}_{\beta_{0,n_{j}}}$, with corresponding
sequences of $\beta_{0,n_{j}} \in {\cal B}$ and $\Pi_{n_{j}}(\cdot)
\in {\cal E}$, such that
\[
\Pr_{n_{j}} \left[ \HICM (\widehat{h}_{\beta_{0,n_{j}}}) > c_{1-\alpha} (h_{{G}})
\right]
> \alpha + 3 \delta \qquad \forall n_{j}
\, .
\]
Using Lemma \ref{lem:boundicm}, choose $C_{1}$ such that
$
\Pr_{n_{j}} \left[  \HICM  (\widehat{h}_{\beta_{0,n_{j}}})
 \geq C_{1}  \right] < {\delta}
$. Now
\[
\Pr \left[ \HICM (\widehat{h}_{\beta_{0}}) > x \right] \leq \Pr\left[
\HICM_{F}  (\widehat{h}_{\beta_{0}}) > x \right]
+ \Pr \left[ \HICM (\widehat{h}_{\beta_{0}})
 \geq C_{1} \right]
\]
for any $x>0$,  $\beta_{0}$ and $P_{\beta_{0}}$.
As $c_{F,1-\alpha}(h) \leq c_{1-\alpha}(h)$,
\[
\Pr_{n_{j}} \left[ \HICM_{F}  (\widehat{h}_{\beta_{0,n_{j}}})
 > c_{F,1-\alpha} (h_{{G}})
\right]
> \alpha + 2 {\delta}
\qquad \forall n_{j}
\, .
\]
As $\HICM_{F} (h)$ is bounded and  Lipschitz in $h$, by the  uniform convergence
of $\widehat{h}_{\beta_{0}}$ to $h_{{G}}$,
\[
\sup_{\beta_{0}} \sup_{P \in {\cal P}_{\beta_0}} \sup_{x}
\left|
\Pr \left[ \HICM_{F} (\widehat{h}_{\beta_0}) > x \right]
-
\Pr \left[ \HICM_{F} (h_{{G}}) > x \right]
\right|
\rightarrow 0
\, .
\]
Therefore, for $n_{j}$ large enough,
$
\Pr_{n_{j}} \left[ \HICM_{F} (h_{{G}}) > c_{F,1-\alpha} (h_{{G}})
\right] \geq \alpha + {\delta}
$,
which contradicts the definition of $c_{F,1-\alpha}(h_{{G}})$.

\subsection{Proof of Theorem \ref{th:power}}
Write
$
 \HICM (\beta_{1})  =
 \HICM (\beta_{0})
-  2 \ip{\widehat{h}_{\beta_0}}{ \left(\beta_1 -
 \beta_0\right)'  \widehat{h}_{Y_2} }
+  \| \left(\beta_1 -  \beta_0\right)' \widehat{h}_{Y_2} \|^{2}
$,
where
\[
\widehat{h}_{Y_2}(s)   \equiv
n^{-1/2} \sum_{i=1}^{n}{
\left(b_0' \widehat{\Omega}(Z_i) b_0\right)^{-1/2} Y_{2i} \exp(i s'Z_{i})}
\, .
\]
\noindent
(i) From the results of Section \ref{sec:ucp},
\[
\left\| \tilde{c}_{n}^{-1} \left(  \widehat{h}_{Y_2}(s)  -
n^{-1/2} \sum_{i=1}^{n}{ \left(b_0' {\Omega}(Z_i) b_0\right)^{-1/2}
\Pi(Z_{i}) \exp(i s'Z_{i})} \right)  \right\|_{\infty}  \cvas 0
\]
as $\tilde{c}_{n} \rightarrow + \infty$  under Assumption
\ref{ass:pipower}-(i). Moreover
\[
\| \tilde{c}_{n}^{-1}
 n^{-1/2} \sum_{i=1}^{n}{ \left(b_0' {\Omega}(Z_i) b_0\right)^{-1/2}
\Pi(Z_{i}) \exp(i s'Z_{i})}
-
\E \left(b_0' {\Omega}(Z_i) b_0\right)^{-1/2} C(Z) \exp(i s'Z)
 \|_{\infty}
\cvas 0
\, ,
\]
uniformly in $P \in {\cal P}_{\beta_0}$.
Using  Lemma \ref{lem:boundicm}, we obtain
\begin{multline*}
\tilde{c}_{n}^{-2} \left( \HICM (\beta_{1}) - \HICM (\beta_{0}) \right)
 \cvas
\\
\left(\beta_1 -  \beta_0\right)'
\E \left[ \sigma^{-1}(Z_1) \sigma^{-1}(Z_2)
C(Z_{1})  C'(Z_{2}) \,
  w(Z_{1}-Z_{2})\right]
   \left(\beta_1 -  \beta_0\right)
\, .
\end{multline*}
By the arguments of \citet[Theorem 1]{Bierens82}, the above matrix is
positive semi-definite, and
\[
 a' \E \left( \sigma^{-1}(Z_1) \sigma^{-1}(Z_2) C(Z_{1}) C(Z_{2})  \,
 w(Z_{1}-Z_{2})\right) a \Rightarrow \quad a
= \mO \quad \mbox{or } \quad  \sigma^{-1}(Z)  C(Z) = \mO
\, .
\]
From Assumptions \ref{ass:variance}-(ii), \ref{ass:pipower}-(i), and
$\beta_1 \neq \beta_0$, the above limit is thus strictly positive.  Therefore
\begin{equation}
\lim_{\tilde{c}_{n} \rightarrow + \infty}
\sup_{P \in {\cal P}_{\beta_0}} \Pr \left[  \HICM (\beta_{1}) - \HICM
(\beta_{0})  > M \right] \rightarrow 1 \qquad \forall M >0
\, .
\label{unbound1}
\end{equation}

Assume now that the conclusion of Theorem \ref{th:power} does not
hold. Then there exists some $\delta>0$, an infinitely increasing
subsequence of sample sizes $n_{j}$, and a sequence of probability
measures $P_{n_{j}} \in {\cal P}_{\beta_{0}}$, with corresponding
sequences  $\Pi_{n_j}(\cdot)$ and
$\tilde{c}_{n_{j}}$, such that
\[
\Pr_{n_{j}} \left[ \HICM (\beta_{1}) < c_{1-\alpha} (h_{{G}}) \right]
> \delta
\qquad \forall n_{j}
\, .
\]
Then
\[
\Pr_{n_{j}} \left[ \HICM (\beta_{1}) - \HICM (\beta_{0})  < c_{1-\alpha} (h_{{G}})  - \HICM (\beta_{0})
\right] > \delta \qquad \forall n_{j}
\, .
\]
But  $\HICM (\beta_{0})$  is uniformly bounded in probability  by Lemma \ref{lem:boundicm}
and so is the critical value $c_{1-\alpha} (h_{{G}})$.
This contradicts (\ref{unbound1}).

(ii) Under Assumption \ref{ass:pipower}-(ii),
\[
 \HICM (\beta_{1n})  =  \HICM (\beta_{0})
-  2 \tilde{c}_n \ip{\widehat{h}_{\beta_0}}{ \delta'  \widehat{h}_{Y_2} } /\sqrt{n}
+  \tilde{c}_n^2  \| \delta'   \widehat{h}_{Y_2} \|^{2} / n
\, .
\]
We then proceed  as above to obtain that $\tilde{c}_{n}^{-2} \left(
\HICM (\beta_{1n}) - \HICM (\beta_{0})
\right)$ converges to a positive limit. The rest of the proof follows
similarly.

\subsection{Proof of Theorem \ref{th:powerspec}}
Consider
$
\HICM (\beta)  = \| \widehat{h}_{\beta} \|^{2}, \
\widehat{h}_{\beta}  = n^{-1/2} \sum_{i=1}^{n}{\left(b'\widehat{\Omega}(Z_i)
b\right)^{-1/2} Y_i'b \exp(is'Z_i)}
$.
Since $\beta$ belongs to the compact set ${\cal B}$,  use the same reasoning
as in Section \ref{sec:ucp} to show that the class of functions
$\left\{\left(b'{\Omega}(Z_i) b\right)^{-1/2} Y_i'b \exp(is'Z_i),
\beta \in {\cal B}, s \in M \right\}$ is BUEI for a
squared-integrable envelope. This yields
$\sup_{\beta \in {\cal B}} \left|
n^{-1} \HICM (\beta) -  H(\beta) \right|  =
o_p(1)$ uniformly in $P$, where
\[
H(\beta) = \int_{}^{}{ \left| \E \left[ (b' \Omega(Z) b)^{-1/2}
(y - Y'_{2}\beta) \exp(is'Z)  \right] \right|^2 \, d\mu(s) }
\, .
\]
Hence
$
n^{-1}   \HICM^*  - \min_{\beta \in {\cal B}} H(\beta)   = o_p(1)
$
uniformly in $P$.

Assume now that the conclusion of Theorem \ref{th:powerspec} does not
hold. Then there exists some $\delta>0$, an infinitely increasing
subsequence of sample sizes $n_{j}$ and a sequence of probability
measures $P_{n_{j}} \in {\cal P} \cap H_{1n}$ such that
$
\Pr_{n_{j}} \left[ \HICM^{*} < c_{1-\alpha} (h_{{G}}) \right]
> \delta
\, ,   \forall n_{j}
$.
Then
\[
\Pr_{n_{j}} \left[ n^{-1}  \HICM^*  - \min_{\beta \in {\cal B}} H(\beta)    <
n^{-1} c_{1-\alpha} (h_{{G}})  -  \min_{\beta \in {\cal B}} H(\beta)
\right] > \delta \ \forall n_{j}
\, .
\]
The critical value is bounded in probability, and under $H_{1n}$, $ n
\min_{\beta \in {\cal B}}
H(\beta)= \tilde{d}_n^{2}$. So whenever $\tilde{d}_n \rightarrow + \infty$, the
quantity $n^{-1} c_{1-\alpha} (h_{{G}})  -  \min_{\beta \in {\cal B}}
H(\beta)$ tends to $-\infty$, and we reach a contradiction.


\section{Additional empirical results}
\label{sec:appfig}

\begin{figure}[ht]
\begin{center}
\includegraphics[width = 6cm, height = 3cm]{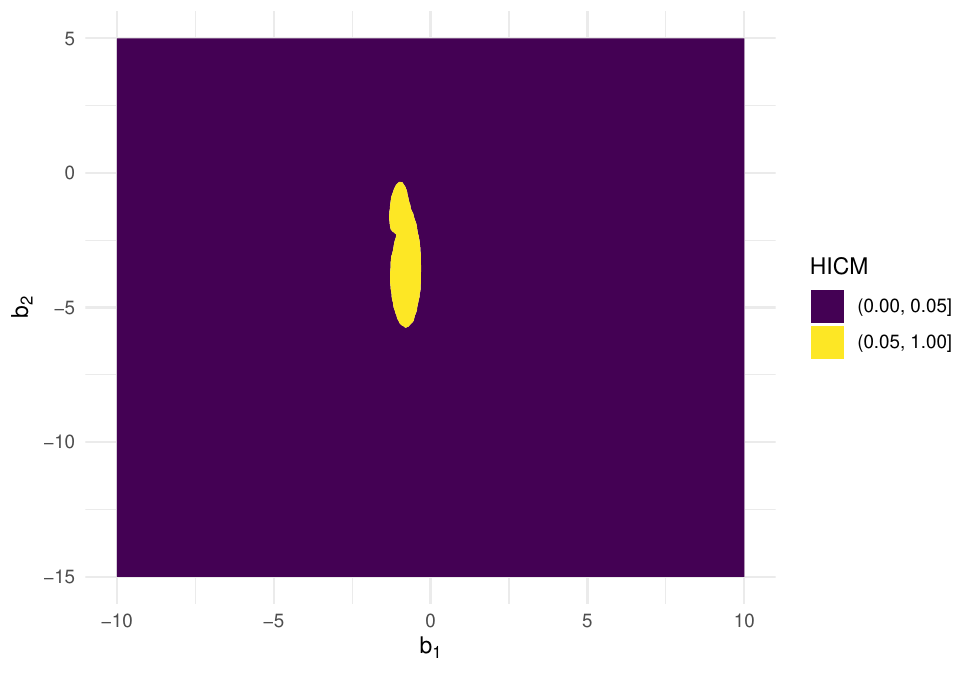}
\includegraphics[width = 6cm, height = 3cm]{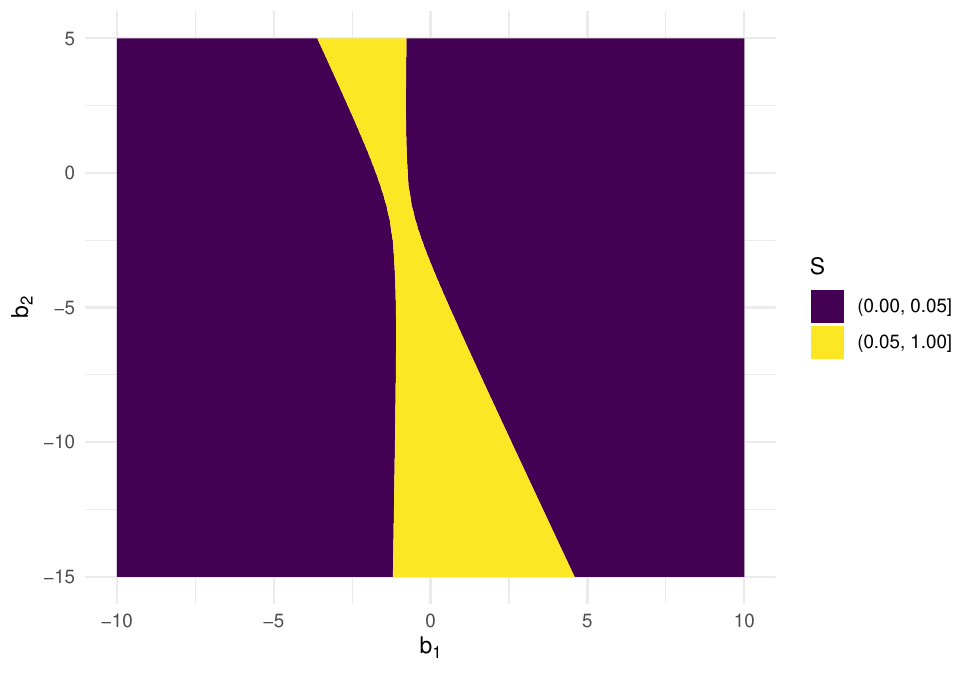}
\end{center}
\caption{95\% confidence regions for the coefficients of the
 population collapse obtained with HICM (left) and S (right) for
 Specification S2 using a  two-dimensional grid  over
 $[-10,10]\times[-15,5]$, with  201 evenly spread points on each interval.}
\label{graph: mexico NL S extended}
\end{figure}

\end{document}